\documentclass[5p,twocolumn,10pt]{elsarticle} 

\usepackage{amsmath}
\usepackage{hyperref}
\usepackage[toc,page]{appendix}


\usepackage{lineno}
\usepackage{hyperref}
\usepackage{enumitem}
\modulolinenumbers[5]

\usepackage{subcaption}
\usepackage{mathtools}
\usepackage{amsmath}
\usepackage{amssymb}
\usepackage{stmaryrd}
\usepackage{amsthm}
\usepackage{amsfonts}
\usepackage{dsfont}
\usepackage{graphicx}
\usepackage{upgreek}
\usepackage{bm}
\usepackage{color}
\usepackage{float}
\usepackage{tikz-cd}
\usepackage{scrextend}
\usepackage[cal=boondoxo]{mathalfa}
\usepackage{nicefrac}
\usepackage{algorithm,algpseudocode} 
\usepackage{todonotes}


\biboptions{numbers,sort&compress}

\DeclareFontFamily{OT1}{pzc}{}
\DeclareFontShape{OT1}{pzc}{m}{it}{<-> s * [1.200] pzcmi7t}{}
\DeclareMathAlphabet{\mathpzc}{OT1}{pzc}{m}{it}


\newcommand{\bx}{{\mathbf{x}}}


\newcommand{\indic}{{\mathbf{1}}}

\newcommand{\dimm}{\mathrm{d}}

\newcommand{\sweep}{\mathsf{sweep}}
\newcommand{\unsweep}{\mathsf{unsweep}}


\newcommand{\interior}{\mathsf{i}}
\newcommand{\exterior}{\mathsf{e}}


\newcommand{\R}{{\mathds{R}}}

\newcommand{\SE}[1]{{\mathrm{SE}(#1)}}


\newcommand{\shape}{S}

\newcommand{\motion}{M}

\newcommand{\field}{\mathcal{f}}
\newcommand{\gield}{\mathcal{g}}
\newcommand{\fs}{\mathsf{FS}}
\newcommand{\fv}{\mathsf{FV}}
\newcommand{\cell}{C}
\newcommand{\dirac}{{\bm\updelta}}
\newcommand{\weight}{w}
\newcommand{\TS}{{\mathcal{D}}}


\renewcommand{\th}{$^\text{th}$ }

\theoremstyle{definition}

\newtheorem{defn}{Definition}

\newtheorem{lemma}{Lemma}
\newtheorem{coro}{Corollary}

\newcommand{\eq}[1]{(\ref{#1})} 
\newcommand{\com}[1]{} 

\journal{Computer-Aided Design}

\bibliographystyle{elsarticle-num}

\addtolength{\textheight}{8mm}
\addtolength{\textwidth}{4mm}
\addtolength{\voffset}{-10mm}

\begin{document}
	
\baselineskip11pt
\begin{frontmatter}

\title{Co-generation of Collision-Free Shapes for Arbitrary One-Parametric Motion}

\author{Clinton B. Morris and Morad Behandish}

\address{\rm
	Palo Alto Research Center (PARC),
	3333 Coyote Hill Road, Palo Alto, California 94304
	\vspace{-15.0pt}
}

\begin{abstract}

Mechanical assemblies can exhibit complex relative motions, during which collisions between moving parts and their surroundings must be avoided. To define feasible design spaces for each part's shape, ``maximal'' collision-free pointsets can be computed using configuration space modeling techniques such as Minkowski operations and sweep/unsweep. For example, for a pair of parts undergoing a given relative motion, to make the problem well-posed, the geometry of one part (chosen arbitrarily) must be fixed to compute the maximal shape of the other part by an unsweep operation. Making such arbitrary choices in a multi-component assembly can place unnecessary restrictions on the design space. A broader family of collision-free pairs of parts can be explored, if fixing the geometry of a component is not required. In this paper, we formalize this family of collision-free shapes and introduce a generic method for generating a broad subset of them. Our procedure, which is an extension of the unsweep, allows for co-generation of a pair of geometries which are modified incrementally and simultaneously to avoid collision. We demonstrate the effectiveness and scalability of our procedure in both 2D and 3D by generating a variety of collision-free shapes. Notably, we show that our approach can automatically generate freeform cam and follower profiles, gear teeth, and screw threads, starting from colliding blocks of materials, solely from a specification of relative motion and without the use of any feature-informed heuristics. Moreover, our approach provides continuous measures of collision that can be incorporated into standard gradient-descent design optimization, allowing for simultaneous collision-free and physics-informed co-design of mechanical parts for assembly.
\end{abstract}

\begin{keyword}
    Automated Design \sep
    Collision Avoidance \sep
	Configuration Space \sep
	Spatial Reasoning \sep
	Persistent Contact
\end{keyword}

\end{frontmatter}

\section{Introduction} \label{sec_intro}

Design for manufacturing and assembly (DfM/DfA) \cite{Boothroyd1996design}, popularized by Boothroyd and Dewhurst in the early 1980s, is a set of principles for design of manufacturable parts to enable faster, cheaper, and more reliable assemblies. These principles include minimizing the number of parts and assembly operations, making these operations simple and fail-proof, standardizing the interfaces, etc.
Such design rules and evaluation criteria can be effective as guidelines for manual design or fairly limited automation with parameterized geometric features that can be manufactured with traditional processes. With the advent of modern (e.g., additive and hybrid \cite{Chong2018review}) manufacturing, there are tremendous opportunities for DfM/DfA demanding a deeper look into geometric and physical modeling of assemblies.

Computational design tools such as gradient-based shape and topology optimization (SO/TO) \cite{Sokolowski1992introduction,Sigmund2013topology} have proven effective to satisfy physics-based performance criteria at the individual part level, owing to their formulation as differentiable objective functions and constraints. However, incorporating assembly and manufacturing constraints, often formulated using geometric and kinematic models, is challenging \cite{Mirzendehdel2019exploring}. A quantitative analysis of concepts such as collision, containment, contact, and complementarity of parts of arbitrarily shapes undergoing arbitrary motions is required, before they can be cast into differentiable measures. This paper focuses on collision and containment, quantified by overlap measures between shape indicator functions \cite{Lysenko2013fourier}, and sets the groundwork of an extensible framework to more complex concepts such as contact and complementarity, quantified by more complex shape functions \cite{Lysenko2016effective,Behandish2017shape}.

Consider, for example, the case of a single degree-of-freedom (DOF) mechanism such as a four-bar or slider-crank linkage, a cam/follower or pinion/gear pair, a latching mechanism, etc. The design of such assemblies often starts from a system-level design (e.g., determining types and relative positions of joints on each link) to satisfy kinematic and dynamic requirements, determining the relative motions between parts \cite{Norton2008design}. While these computations are automated and available in commercial packages, there is a gap in cascading them to part-scale 3D design with any level of generality. For example, there is no systematic way to impose collision avoidance between the crank and coupler in a four-bar linkage or persistent contact between the cam and follower with prescribed motion, while optimizing their shapes for performance (e.g., stiffness and strength) subject to manufacturing constraints \cite{Mirzendehdel2020topology,Mirzendehdel2021optimizing,Mirzendehdel2022topology}. Fig. \ref{fig_engine} \cite{Chakravarthi2017} illustrates a more complex assembly with various collision, containment, contact, and complementarity requirements that are difficult to formalize for design automation. Ultimately, to automate the design or redesign of such a complex assembly, all of these concepts must be integrated into automated design workflows. While designing such a complex assembly is beyond current capabilities, automated collision avoidance is a necessary advancement to achieve this goal. 

\begin{figure}[ht]
	\centering
	\includegraphics[width=0.25\textwidth]{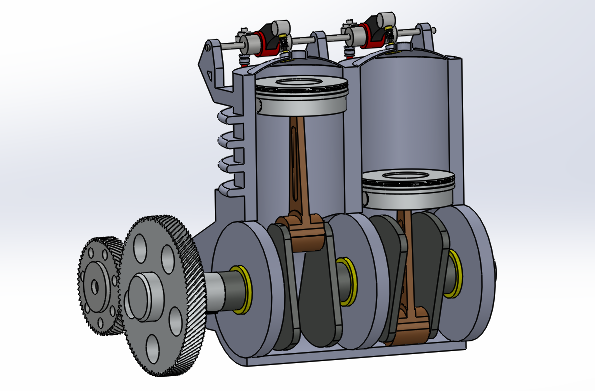}
	\caption{Cross-section of engine demonstrating collision, contact, containment, and complementarity constraints to properly function \cite{Chakravarthi2017}.}    
	\label{fig_engine}
\end{figure}

In this paper, we devote specific attention to collision avoidance or (equivalently) containment constraints, as they capture a broad class of design criteria involving interactions of shapes and motions \cite{Ilies2000shaping,Ilies2002class}. To incorporate such constraints into gradient-based optimization, the extent of their violations must be locally evaluable and differentiable with respect to design variables (e.g., small geometric or topological changes to the design), to penalize the objective functions. Additionally, collision is a pairwise relation, presenting additional challenges when multiple moving parts are to be designed simultaneously. Changes to one part will potentially introduce or eliminate collisions with the others, necessitating a {\it co-design} workflow in which the parts are shaped simultaneously, evolution of one part directly informing those of the others in the assembly. 

To the best of our knowledge, such a co-design workflow that enables designing complex shapes under arbitrary motions does not exist. This paper presents a framework to develop such workflows to design {\it families} of parts in which collision avoidance constraints can be simultaneously satisfied and seamlessly integrated with other (e.g., performance and manufacturing) constraints. More specifically, we present a procedure to generate members of a family of ``maximal pairs'' of collision-free parts. We do not divert our attention to formulating other constraints in this paper, although we show a general recipe to couple them with collision constraints.

\subsection{Related Work} \label{sec_lit}

Existing approaches to generating shapes that satisfy motion-based collision and contact constraints place excessive restrictions on the design space, missing opportunities for better-performing and more cost-effective assemblies. Ilie\c{s} and Shapiro \cite{Ilies2000shaping,Ilies2002class} developed a framework to produce a ``maximal'' shape for a part, moving against another part of prescribed shape, to satisfy collision avoidance or (equivalently) containment constraints. This led to the definition of a fundamental new solid modeling operation called unsweep \cite{Ilies1999dual,Ilies1997unsweep}. Nelaturi and Shapiro \cite{Nelaturi2011configuration} extended the idea to a broader class of configuration space operations, based on group morphology \cite{Lysenko2010group}. These operations have proven effective in solving various manufacturing analysis and process planning problems \cite{Behandish2018automated,Behandish2019classification,Nelaturi2019automatic}. 

There are at least two challenges with using such operations, formulated in a set-theoretic language, in a design framework. First, using them to compute a maximal entity (shape or motion) requires full knowledge of all of the other entities---for instance, sweep/unsweep map a given shape and a given motion to a maximal shape, while homogeneous Minkowski products/quotients map a given pair of shapes to a maximal motion, avoiding collisions \cite{Lysenko2010group}. Second, these set-theoretic operations do not interoperate well with other (e.g., performance and manufacturing) constraints, commonly expressed using analytic inequalities. Unlike the latter, the former do not provide quantitative measures of violation of collision constraints, which are critical for navigating {\it tradeoffs} with other constraints. Measure-theoretic generalizations of such operations \cite{Behandish2017analytic} based on convolution algebras \cite{Lysenko2010group} can be used to obtain locally evaluable and differentiable measures that can be integrated with other constraints via Lagrange multipliers \cite{Mirzendehdel2019exploring}. However, the attention has so far been restricted to single part design.

In contrast, our procedure does not require either part's shape to be fixed upfront and can generate families of maximal collision-free designs (in a partial ordering of pairs), using local and global measures that can be integrated with other constraints and differentiated for gradient-descent optimization. We present an incremental co-generation procedure that subsumes one-way unsweeps of either part, against the initial design of the other part's complement, as extreme cases of the family of maximal pairs. 

To avoid premature decisions and excessive geometric constraints on part geometries, St\"ockli and Shea \cite{Stockli2020topology} developed an automated procedure for simultaneously modifying the shapes of two parts in relative motion. Co-generation of the shapes was achieved using a TO procedure in which they formulated and minimized an aggregate collision measure. They showed that such measures can be rapidly computed through matrix multiplications during the optimization loop to globally quantify collisions. To update the shapes within an iterative optimization loop to alleviate the collisions, basic rules were applied to grow and shrink the shapes.

Our computational algorithm has similar elements to \cite{Stockli2020topology} in the way collision measures are computed using a voxelization of the moving domains, although we present a formal set-theoretic groundwork, built on top of \cite{Ilies2000shaping,Ilies2002class,Ilies1997unsweep,Nelaturi2011configuration} that is not tied to a specific representation scheme and can be discretized in many different ways. Moreover, we use gradients and local measures of the collision measures to augment the TO sensitivity fields, enabling further scalability. Most importantly, our approach generates families of maximal shapes as opposed to a single arbitrary pair obtained using rule-based heuristics in \cite{Stockli2020topology}.

\subsection{Contributions \& Outline}

This article presents a general formulation and computational framework for co-generating maximal collision-free shapes in arbitrary relative motion. We show that:

\begin{enumerate}
    \item The unsweep operation and the underlying notion of {\it maximality} (for a partial ordering of solids) can be expanded to pairs of solids to more broadly explore the space of collision-free designs (Section \ref{sec_prelim}).
    \item The collision of solids in relative motion can be measured locally and globally to use as a differentiable violation measure to penalize gradient-based optimization (Section \ref{sec_collisionMeasure}).
    \item The violation measure can be used to formulate an iterative and incremental co-generation procedure for a broad subset of maximal pairs of collision-free solids (Section \ref{sec_method}).
    \item The procedure can scale to efficiently co-generate a variety of nontrivial collision-free solids that, if allowed by motions, exhibit persistent contact\footnote{Persistent contact means nonempty boundary intersection maintained throughout the motion. Here, contact is not explicitly enforced, although it could be (e.g., by adding contact or complementarity measures to the objective function). Its persistence cannot be guaranteed by shape design, as it also depends on the given motion. When it is possible, maximizing volume subject to collision-avoidance appears to improve its chances. Understanding persistent contact properties of shapes and motions requires further research (out of scope here).}  in 2D and 3D (Section \ref{sec_results}).
    \item The violation measures enable using collision-avoidance constraints in gradient-based optimization to extend beyond geometric reasoning (Section \ref{sec_conclusion}).
\end{enumerate}
We also show that the collision measure can be efficiently computed by linear-algebraic operations, involving a precomputed pairwise correlation matrix that depends on the motion, pre- and post-multiplied by arrays of design variables, all of which can be parallelized on the CPU/GPU (Section \ref{sec_collisionMeasure}). However, this separability of computations was first observed by St\"ockli and Shae \cite{Stockli2020topology}, hence does not constitute a novel contribution of this paper.

\section{Preliminaries}  \label{sec_prelim}

Kinematic design of mechanical parts and assemblies under prescribed motions can often be formulated as computing ``maximal'' pointsets that satisfy collision avoidance or (equivalently) containment constraints. The maximality is defined in the partial ordering of pointsets with respect to containment. Such maximal shapes prune the design space to a feasible subspace, defined by the powerset of the maximal pointset \cite{Mirzendehdel2019exploring}, for subsequent design space exploration to satisfy various performance and manufacturing criteria (e.g. via TO \cite{Mirzendehdel2020topology,Mirzendehdel2019exploring,Mirzendehdel2021optimizing,Mirzendehdel2022topology,Iyer2021pato}).

Let $\shape \subset \R^\dimm$ be a $\dimm-$dimensional solid or `r-set', defined as a compact (bounded and closed) regular and semianalytic pointset in the Euclidean $\dimm-$space $\R^\dimm$ \cite{Requicha1980representations}. Let $\motion \subset \SE{\dimm}$ be a motion acting on $\R^\dimm$, i.e., a parameterized collection of rigid configurations or `poses' (combined rotations and translations) that a $\dimm-$dimensional pointset can assume \cite{Lozano-Perez1990spatial}. 
For computational purposes, the solids are commonly represented by boundary or volumetric representations (e.g., NURBS, CSG, surface/volume mesh, voxels, or sampled point clouds) \cite{Requicha1980representations}, while motions are typically represented by sampled or parameterized homogeneous matrices, dual quaternions, etc. We restrict our attention to $\dimm := 2, 3$ and one-parametric motions.

The first class of problems can be formulated and solved using the sweep and unsweep operations \cite{Ilies1997unsweep}. The sweep of a given solid $\shape$ under a given motion $\motion$ is another pointset $\sweep(\motion, \shape) \subset \R^\dimm$, which is a superset of $\shape$ collecting all points that are included in the displaced shape $\tau \shape$ for at least one configurations $\tau \in \motion$:
\begin{itemize}
    \item Explicit definition (by an indexed union):
    \begin{equation*}
        \sweep(\motion, \shape) \triangleq \bigcup_{\tau \in \motion} \tau \shape.
    \end{equation*}
    \item Implicit definition (by a membership test):
    \begin{align*}
        \sweep(\motion, \shape) &\triangleq \big\{ \bx \in \R^\dimm ~|~ \exists \tau \in \motion : \bx \in \tau \shape \big\} \\
        &= \big\{ \bx \in \R^\dimm ~|~ \motion^{-1} \bx \cap \shape \neq \emptyset \big\},
    \end{align*}
    where $\motion \bx \triangleq \{\tau \bx~|~ \tau \in \motion\}$ and $\tau \shape \triangleq \{\tau \bx ~|~ \bx \in \shape\}$ where $\tau \bx \in \R^\dimm$ denotes a displaced point.
\end{itemize}
Note that the membership of a query point $\bx \in \R^\dimm$ in the sweep is tested by applying the inverse motion to the query point and checking whether the resulting trajectory passes through the given shape.

On the other hand, the unsweep of a given solid $\shape$ under a given motion $\motion$ is another pointset $\unsweep(\motion, \shape) \subset \R^\dimm$, defined by the subset of $\shape$ including all points that remain included in the displaced shape $\tau \shape$ for all configurations $\tau \in \motion$:
\begin{itemize}
    \item Explicit definition (by an indexed intersection):
    \begin{equation*}
        \unsweep(\motion, \shape) \triangleq \!\! \bigcap_{\tau \in \motion^{-1}} \!\! \tau \shape.
    \end{equation*}
    \item Implicit definition (by a membership test):
    \begin{align*}
        \unsweep(\motion, \shape) &\triangleq \big\{ \bx \in \R^\dimm ~|~ \forall \tau \in \motion^{-1} : \bx \in \tau \shape \big\} \\
        &= \big\{ \bx \in \R^\dimm ~|~ \motion \bx \cap \overline{\shape} = \emptyset \big\},
    \end{align*}
    where $\overline{\shape} \triangleq (\R^\dimm - \shape)$ is the set complement of $\shape$.
\end{itemize}
Note that the membership of a query point $\bx \in \R^\dimm$ in the unsweep is tested by applying the forward motion to the query point and checking whether the resulting trajectory remains inside the given shape.

To see how sweep and unsweep can be used in practical design problems, consider a pair of solids $\shape_1, \shape_2 \subset \R^\dimm$ moving under one-parametric motions $\motion_1, \motion_2 \subset \SE{\dimm}$. The solids can be parts in a single DOF mechanism, e.g., any pair of links in a four-bar linkage, a cam/follower pair, a pinion/gear pair, a latch/pin pair, etc. The relative motion (of $\shape_2$ as observed from a frame attached to $\shape_1$) is $\motion \triangleq \motion_1^{-1}\motion_2$. If we fix the shape of solid $\shape_1$, the maximal pointset that does not collide with $\shape_1$ (i.e., is contained within $\overline{\shape}_1$) throughout the motion is:
\begin{equation}
    \shape_2^\ast \triangleq \overline{\sweep(\motion^{-1}, \shape_1)} = \unsweep(\motion, \overline{\shape}_1).
\end{equation}
The maximality implies that every other solid that satisfies the no-collision constraint will be a subset of the maximal pointset (i.e., $(\motion \shape_2 \cap \shape_1) = \emptyset$ iff $\shape_2 \subseteq \shape_2^\ast$).

Let us assume the motion can be parameterized as $\motion = \{ \tau(t) ~|~ 0 \leq t \leq 1\}$. For the $1-$DOF mechanisms exemplified above, $t \in [0, 1]$ can be thought of as normalized time---or any parameter monotonically changing with time---over the mechanism's motion cycle. Consider the car hood latch example in Fig. \ref{fig_latch}, revisited from \cite{Ilies2004equivalence}. 

\begin{figure}[h]
    \centering
    \includegraphics[width=0.4\textwidth]{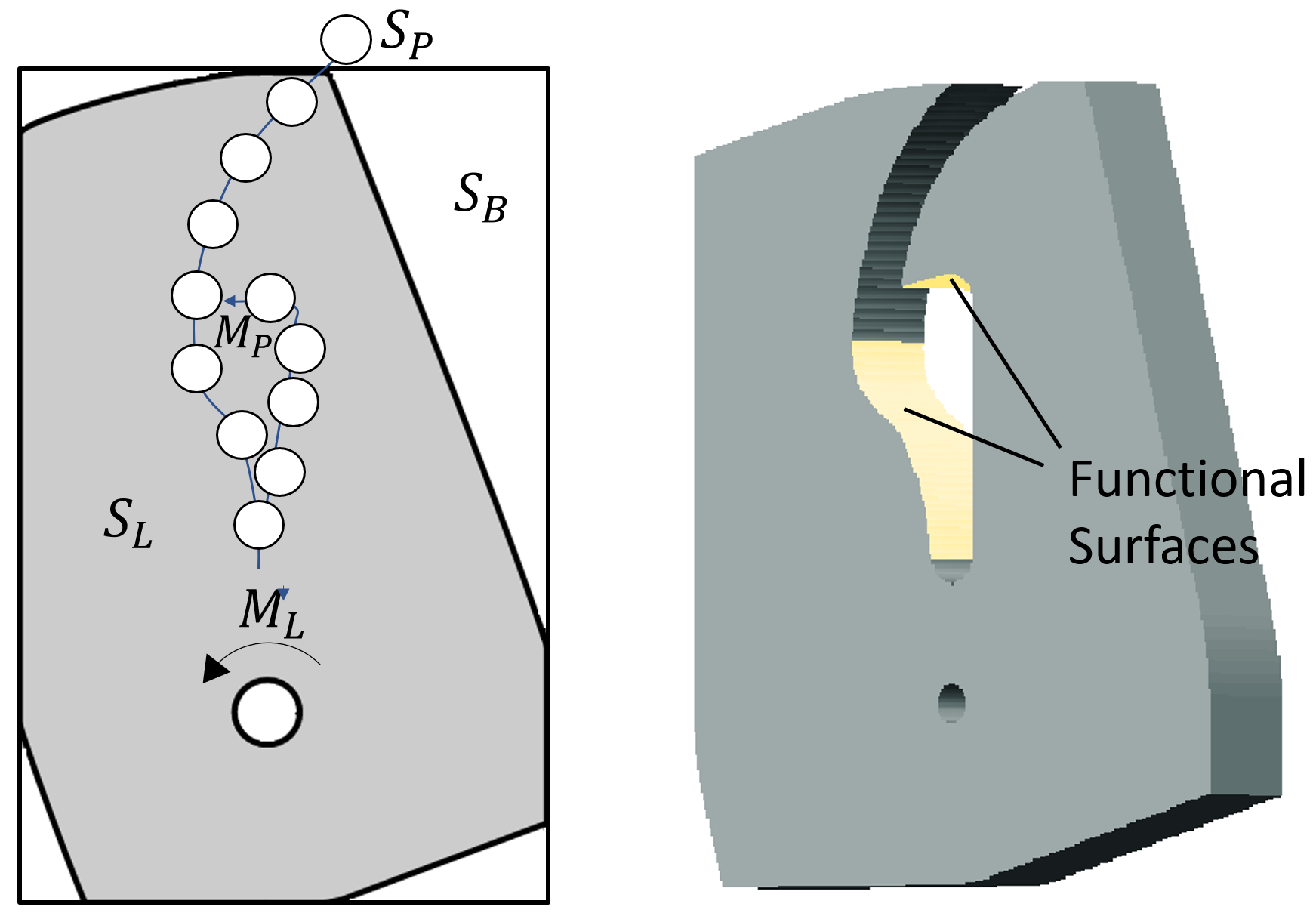}
    \caption{Car hood latch example adopted from \cite{Ilies2004equivalence}, in which a rotating latch must remain contained in an enclosure while maintaining contact (without collision) over a functional surface with a pin translating vertically. The relative motion is shown on the left.}    
    \label{fig_latch}
\end{figure}
\begin{itemize}
    \item Our goal is to design the shape of a latch $\shape_\text{L}$, rotating around a pivot by a given trajectory $\motion_\text{L}$ with respect to the car (common frame of reference).
    \item {\bf Constraint 1:} Let $\shape_\text{B}$ represent a bounding region that the latch must not exit to avoid collision with surrounding objects under the hood that are stationary with respect to the car (hence $\motion_\text{B}$ is identity). 
    \item {\bf Constraint 2:}  Let $\shape_\text{P}$ represent a pin of given shape, attached to the hood, thus moving along a given trajectory $\motion_\text{P}$ with respect to the car. The latch must avoid collision while maintaining contact with the pin, as it moves with a relative motion $\motion_\text{P}^{-1} \motion_\text{L}$.
\end{itemize}
The design space for the latch $\shape_\text{L}$ is thus restricted (by the conjunction of these constraints) to subsets of
\begin{equation}
    \shape_\text{L}^\ast := \unsweep(\motion_\text{B}^{-1} \motion_\text{L}, \shape_\text{B}) \cap \unsweep(\motion_\text{P}^{-1} \motion_\text{L}, \shape_\text{P}).
\end{equation}
A similar approach can be used to design the cam profiles for given follower profiles and cam/follower motions. Despite its generality (for arbitrarily complex shapes and motions) and rigor, this approach has three major limitations that we aim to address:

First, using unsweep requires pre-defining one of the two shapes in relative motion to obtain a maximal shape for the other one. This requirement may be fine for design problems with clear choices for the shapes of certain components or spatial regions because of non-negotiable conditions (e.g., the bounding box in Fig. \ref{fig_latch}) or simple profiles (e.g., the cylindrical pin in Fig. \ref{fig_latch} and cylindrical/flat followers). However, in problems where both shapes in relative motion are to be designed simultaneously (e.g., the cam/follower system), the arbitrary choice of either shape limits the design freedom and requires trial-and-error iterations over a theoretically uncountable---and computationally exponential---number of possibilities. 

Second, the maximal pointsets that satisfy these constraints come with no guarantees for effective contact, shape complementary, force/torque transmission, and other requirements, whose quantification for arbitrary shapes is nontrivial. For example, consider two parts in a relative {\it screw} motion, each to be selected as a subset of the bounding boxes. If one of the parts is shaped as a cylindrical hole or a threaded nut, the maximal shape for the other part will be computed by an unsweep as a matching cylindrical peg or a threaded bolt, respectively. The contact and complementary is perfect in this case, because the screw motions form a symmetry subgroup of $\SE{3}$ and the first part's shape (the nut) was carefully selected such that this particular screw motion of the same pitch is its symmetry; hence unsweep generates a matching shape (the bolt) that, not only avoids collision with the nut, but is also a perfect complement of it and remains as such throughout the motion.%
\footnote{To be accurate, these statements apply strictly to an infinite extension of a bolt/nut of the same pitch, meaning that applying the motion to their shapes does not change the shapes (symmetry) and they partition the $3-$space (complementarity) throughout the motion.}
However, if we picked a slightly different shape for the first part, such as a half-cylindrical hole or a nut of a different pitch, the unsweep would generate a maximal shape that, despite avoiding collision, offers poor contact and complementarity. This example illustrates why co-design is a more practical approach where both shapes are chosen and refined simultaneously. 

The third limitation is that set-theoretic operations such as unsweep, used to express kinematic constraints, do not interoperate well with analytic constraints such as global or local inequalities in terms of field variables, used to express physics-based constraints (e.g., stiffness or strength). As a result, concurrently satisfying kinematic and physical constraints is challenging \cite{Nelaturi2019automatic}. One possible approach is to translate set-theoretic operations with measure-theoretic parallels, e.g., by replacing Boolean and morphological operations of pointsets with pointwise logical operations and convolutions of their indicator functions, respectively \cite{Behandish2017analytic}. The later extend the former by not only providing binary membership tests for set operations, but also continuous and differentiable membership ``grades'' that measure violations of constraints and can be penalized to sensitivity fields for TO \cite{Mirzendehdel2019exploring,Mirzendehdel2020topology}. 

We overcome the first limitation by defining an analytic generalization of unsweep to enable incremental co-generation of families of ``maximal pairs'' that satisfy the no-collision or (equivalently) containment constraints. We apply a simple design rule (namely, volume maximization) subject to these constraints, to encourage persistent contact, i.e., contact maintained throughout the motion. While this partially overcomes the second limitation stated above, more sophisticated contact or complementarity measures \cite{Lysenko2016effective,Behandish2017shape} can be used in the objective function to enforce persistent contact. It is important to note that persistent contact depends not only on shape design, but also on the properties of motion (fixed here) which are not sufficiently understood. We do not address these issues in this paper to maintain focus on the co-generation process to incrementally minimize and eliminate collision. Our approach in this paper is in the same spirit of \cite{Behandish2017analytic}, providing a measure-theoretic formulation of sweep/unsweep and collision that is differentiable with respect to the field variables and can thus overcome the third limitation stated above.

\section{Collision Measures} \label{sec_collisionMeasure} \label{sec_col}

In this section, we present a formal modeling framework (Section \ref{sec_set}) to define maximal collision-free pairs of solids under given one-parametric motions, and a discretization scheme (Section \ref{sec_disc}) to make them computable. In particular, we show that the computations can be factored into offline pre-computation of a motion-dependent and shape-independent correlation matrix and online computation of collision measures for specific shapes (i.e., updated during iterative design) via fast matrix multiplication.
  
\subsection{Set-Theoretic Formulation} \label{sec_set}
  
Let us consider two design domains $\Omega_1, \Omega_2 \subseteq \R^\dimm$ for the two solid we aim to co-design ($\dimm = 2, 3$). Let us begin from two given initial designs $\shape_1 \subseteq \Omega_1$ and $\shape_2 \subseteq \Omega_2$ representing the shapes of the two solids at rest (i.e., before applying the motion). Let $\motion_1, \motion_2 \in \SE{\dimm}$ represent the motions that these solids experience with respect to a common frame of reference. We restrict our attention to one-parametric motions defined by the following sets:
\begin{align}
    \motion_1 \triangleq \big\{ \tau_1(t) ~|~ 0 \leq \tau \leq 1 \big\}, \\
    \motion_2 \triangleq \big\{ \tau_2(t) ~|~ 0 \leq \tau \leq 1 \big\},
\end{align}
where $\tau_1, \tau_2 : [0, 1] \to \SE{\dimm}$ are continuously time-variant configurations, and can be represented by homogeneous matrices, vector-quaternion pairs, dual quaternions, etc. The displaced solids at any given time $t \in [0, 1]$ are:
\begin{align}
    \shape_1(t) \triangleq \tau_1(t)\shape_1 = \big\{ \tau_1(t) \bx ~|~ \bx \in \shape_1 \big\}, \\
    \shape_2(t) \triangleq \tau_2(t)\shape_2 = \big\{ \tau_2(t) \bx ~|~ \bx \in \shape_2 \big\}.
\end{align}
Without loss of generality, we assume $\tau_1(0) = \tau_2(0)$ to be identity, so $\shape_1(0) = \shape_1$ and $\shape_2(0) = \shape_2$, as intended.

To formulate collision measures, it is more convenient to represent the two pointsets implicitly via binary membership tests, also called indicator or characteristic functions $\indic_{\shape_1}, \indic_{\shape_2} : \R^3 \to \{0, 1\}$, defined generally by:
\begin{equation}
    \indic_{\shape}(\bx) \triangleq \left\{
    \begin{array}{ll}
         1 & \text{if}~ \bx \in \shape,  \\
         0 & \text{otherwise}.
    \end{array}
    \right.
\end{equation}
Note that indicator functions are contra-variant with rigid transformations, i.e., $\indic_{\tau \shape} (\bx) = \indic_{\shape}(\tau^{-1} \bx)$, meaning that a membership query for a given point against the displaced solid can be computed by displacing the query point along the inverse trajectory and testing its membership against the stationary solid.

Let $\motion = \motion_1^{-1} \motion_2$ stand for the relative motion of $\shape_2$ as observed from a frame of reference attached to $\shape_1$, noting that by kinematic inversion, $\motion^{-1} = \motion_2^{-1} \motion_1$ would represent the relative motion of $\shape_1$ as observed from a frame of reference attached to $\shape_2$.

Let $\motion_{1,2}$ (resp. $\motion_{2,1}$) represent the relative motions of $\shape_2$ (resp. $\shape_1$) as observed from a moving frame of reference attached to $\shape_1$ (resp. $\shape_2$):
\begin{align}
    \motion_{1,2} \triangleq \motion_1^{-1} \motion_2 = \big\{\tau_{1,2}(t) ~|~ 0 \leq t \leq 1 \big\}, \\
    \motion_{2,1} \triangleq \motion_2^{-1} \motion_1 = \big\{\tau_{2,1}(t) ~|~ 0 \leq t \leq 1 \big\},
\end{align}
where $\tau_{1,2}(t) = \tau_1^{-1}(t) \tau_2(t)$ and $\tau_{2,1}(t) = \tau_2^{-1}(t) \tau_1(t)$. Note also that $\tau_{1,2}(t) = \tau_{2,1}^{-1}(t)$ hence $\motion_{1,2}(t) = \motion_{2,1}^{-1}(t)$.

The displaced solids at any given time $t \in [0, 1]$ in the relative frames are:
\begin{align}
    \shape_{1,2}(t) \triangleq \tau_{2,1}(t)\shape_1 = \big\{ \tau_{2,1}(t) \bx ~|~ \bx \in \shape_1 \big\}, \label{eq_S12} \\
    \shape_{2,1}(t) \triangleq \tau_{1,2}(t)\shape_2 = \big\{ \tau_{1,2}(t) \bx ~|~ \bx \in \shape_2 \big\}. \label{eq_S21}
\end{align}

To quantify the contribution of every point $\bx \in \R^3$, attached to one shape, to its collision with the other shape, we measure the duration over which the point's trajectory collides with the latter:
\begin{align}
    \field_{2,1}(\bx) &\triangleq \int_0^1 \indic_{\shape_{2,1}(t)}(\bx) ~dt = \int_0^1 \indic_{\shape_2}(\tau_{2,1}\bx) ~dt, \label{eq_f1_} \\
    \field_{1,2}(\bx)  &\triangleq \int_0^1 \indic_{\shape_{1,2}(t)}(\bx) ~dt = \int_0^1 \indic_{\shape_1}(\tau_{1,2}\bx) ~dt, \label{eq_f2_}
\end{align}
as illustrated in Fig. \ref{fig_shapes}. To eliminate the contribution of the points that are outside each shape, we can multiply by the indicator functions of each shape:
\begin{align}
    \overline{\field}_{2,1}(\bx) &\triangleq  \int_0^1 \indic_{\shape_2}(\tau_{2,1}\bx) \indic_{\shape_1}(\bx) ~dt,  \label{eq_f1} \\
    \overline{\field}_{1,2}(\bx) &\triangleq  \int_0^1 \indic_{\shape_1}(\tau_{1,2}\bx) \indic_{\shape_2}(\bx) ~dt.  \label{eq_f2}
\end{align}

\begin{figure}[ht]
   	\centering
    \includegraphics[width=0.45\textwidth]{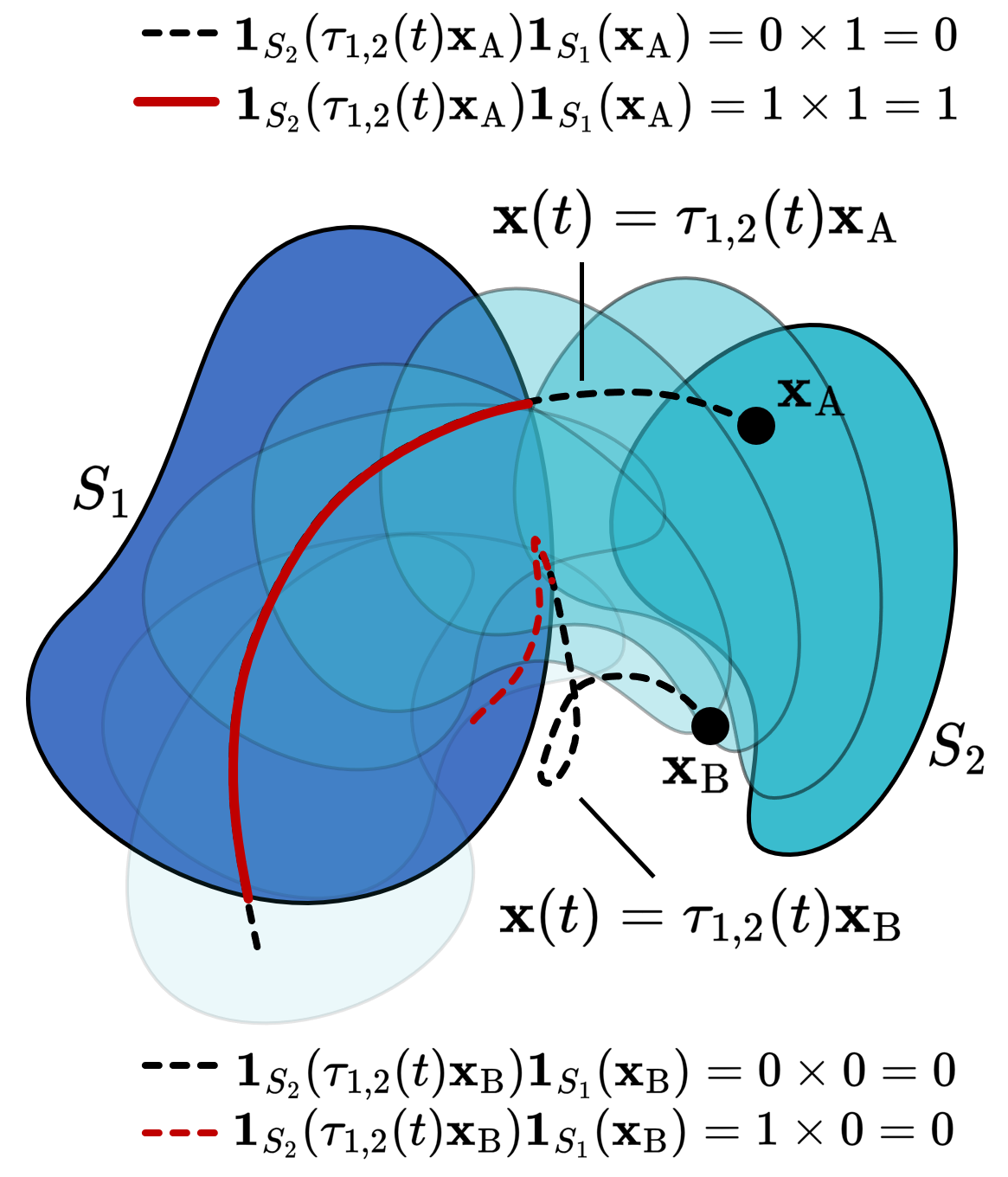}
    \caption{Local collision measure $\overline{\field}_{2,1}(\bx)$ of two query points $\mathbf{x_1}$ and $\bx \in \{ \bx_\mathrm{A}, \bx_\mathrm{B} \}$. The relative trajectory of each query points is determined by the relative configurations, $\tau_{1,2}(t)$ of $S_{2}$ to $S_{1}$.} \label{fig_shapes}
\end{figure}   

\begin{lemma}
    The two functions in \eq{eq_f1} and \eq{eq_f2} are implicit representations of the following unsweeps as their $0-$superlevel sets (up to regularization), i.e., the following statement is true ``almost everywhere'':%
    \footnote{A statement is true ''almost everywhere'' for pointsets with respect to a $d-$measure if it be untrue only over $\dimm-1$ or lower-dimensional pointsets, which disappear upon topological regularization \cite{Behandish2017analytic}.}
    \begin{align}
        \bx \in \unsweep(\motion_{1,2}, \overline{\shape}_2) \cap \Omega_1 ~\overset{\mathsf{ae}}{\rightleftharpoons}~ \overline{\field}_{2,1}(\bx) = 0, \\
        \bx \in \unsweep(\motion_{2,1}, \overline{\shape}_1) \cap \Omega_2 ~\overset{\mathsf{ae}}{\rightleftharpoons}~ \overline{\field}_{1,2}(\bx) = 0,
    \end{align}
\end{lemma}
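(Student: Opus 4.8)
The plan is to prove each equivalence by collapsing the time integral to a pointwise statement about a single relative trajectory and then reading off the unsweep's membership test. The two lines are mirror images under the simultaneous swap $1\leftrightarrow 2$ (which also exchanges $\motion_{1,2}\leftrightarrow\motion_{2,1}$, $\shape_1\leftrightarrow\shape_2$, and $\Omega_1\leftrightarrow\Omega_2$), so I would prove the first and recover the second verbatim by relabeling. Fixing $\bx\in\Omega_1$, the integrand $t\mapsto \indic_{\shape_2}(\tau_{2,1}(t)\bx)\,\indic_{\shape_1}(\bx)$ is non-negative and bounded by $1$, so its integral over $[0,1]$ vanishes if and only if the integrand is zero for Lebesgue-almost every $t$. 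The constant masking factor $\indic_{\shape_1}(\bx)$ only zeroes the measure on the complement of the current part and plays no role in characterizing the maximal region; the substantive content is therefore the vanishing of the time integral $\int_0^1 \indic_{\shape_2}(\tau_{2,1}(t)\bx)\,dt$, which reduces to the condition $\indic_{\shape_2}(\tau_{2,1}(t)\bx)=0$ for a.e. $t\in[0,1]$.

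Next I would identify this trajectory-avoidance condition with the unsweep. Using the contra-variance $\indic_{\tau\shape}(\bx)=\indic_{\shape}(\tau^{-1}\bx)$ together with $\tau_{1,2}=\tau_{2,1}^{-1}$ (equivalently $\motion_{2,1}=\motion_{1,2}^{-1}$ as sets), the statement ``$\tau_{2,1}(t)\bx\notin\shape_2$ for all $t$'' is exactly ``$\bx\in\tau_{2,1}(t)\overline\shape_2$ for all $t$,'' i.e. $\bx\in\bigcap_{t}\tau_{2,1}(t)\overline\shape_2$. This indexed intersection is precisely the explicit definition $\unsweep(\motion,\shape)=\bigcap_{\tau\in\motion^{-1}}\tau\shape$ applied to $\overline\shape_2$, equivalently the implicit test $\motion\bx\cap\shape_2=\emptyset$. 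The bookkeeping of forward-versus-inverse displacement here is what fixes which of $\motion_{1,2},\motion_{2,1}$ labels the operator, and must be tracked with care so that the trajectory appearing in the field matches the direction in the unsweep's defining intersection; intersecting with $\Omega_1$ merely restricts the equivalence to the first design domain on which the measure is posed. The companion claim for $\overline\field_{1,2}$ then follows by the symmetry noted above.

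The step I expect to be the crux---and the reason the statement is asserted only ``almost everywhere,'' up to regularization---is upgrading ``the integrand vanishes for a.e. $t$'' to the genuine set identity ``the trajectory lies in $\overline\shape_2$ for every $t$.'' These differ precisely on trajectories that graze the boundary $\partial\shape_2$: such a trajectory can meet $\shape_2$ on a measure-zero set of times, contributing nothing to the integral yet intersecting $\shape_2$ in the strict set-theoretic sense, so the vanishing of the field does not force $\bx$ into the unregularized unsweep. Since $\shape_2$ is a compact regular semianalytic solid and $\tau_{2,1}(\cdot)$ is continuous, I would argue that the locus of these grazing query points is confined to a pointset of dimension at most $\dimm-1$, which is exactly what topological regularization discards on both sides of the equivalence. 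I would therefore conclude the statement in the almost-everywhere sense, invoking the cited regularization result to equate the vanishing locus of the field with the regularized unsweep, rather than claiming a pointwise set equality.
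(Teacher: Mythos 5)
Your high-level route---collapse the time integral to an a.e.-in-$t$ statement, identify the trajectory-avoidance condition with the unsweep's defining intersection, and charge the ``all $t$'' versus ``a.e.\ $t$'' discrepancy (grazing trajectories) to regularization---is the same idea as the paper's own proof, which is only a two-sentence sketch deferring the derivation to its citation; your grazing/a.e.\ analysis is in fact more explicit than what the paper provides. The genuine problem is in the inversion bookkeeping, the very step you flagged as needing care. From ``$\tau_{2,1}(t)\bx\notin\shape_2$ for all $t$'' you conclude ``$\bx\in\tau_{2,1}(t)\overline{\shape}_2$ for all $t$.'' That inference is wrong: $\tau_{2,1}(t)\bx\in\overline{\shape}_2$ holds iff $\bx\in\tau_{2,1}^{-1}(t)\overline{\shape}_2=\tau_{1,2}(t)\overline{\shape}_2$, so the correct characterization is $\bx\in\bigcap_{t}\tau_{1,2}(t)\overline{\shape}_2=\bigcap_{\tau\in\motion_{2,1}^{-1}}\tau\overline{\shape}_2=\unsweep(\motion_{2,1},\overline{\shape}_2)$. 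In other words, the zero set of $\overline{\field}_{2,1}$ (within $\shape_1$, up to regularization) is $\unsweep(\motion_{2,1},\overline{\shape}_2)$, with subscript $2,1$, not $\unsweep(\motion_{1,2},\overline{\shape}_2)$ as you conclude. The two sets genuinely differ: if $\shape_1$ is stationary and $\shape_2$ translates rightward, the region forbidden to $\shape_1$ is the rightward sweep of $\shape_2$, whose complement is $\unsweep(\motion_{2,1},\overline{\shape}_2)$, whereas $\unsweep(\motion_{1,2},\overline{\shape}_2)$ excludes the leftward sweep. Your slip happens to reproduce the lemma exactly as printed only because the printed statement itself has its motion subscripts swapped: it is inconsistent with the corollary immediately following it, which (correctly, per the paper's definitions) takes $\shape_1^\ast=\unsweep(\motion_{2,1},\overline{\shape}_2)\cap\Omega_1$ in \eq{eq_unsweep_2} as the maximal region for the first solid, i.e., exactly the set on which $\overline{\field}_{2,1}$ must vanish. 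Two inversion errors cancel here; a correct proof derives the index-swapped statement and should flag the typo rather than ratify it.

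Separately, you dismiss the masking factor $\indic_{\shape_1}(\bx)$ too quickly. Because $\overline{\field}_{2,1}$ vanishes identically on $\Omega_1\setminus\shape_1$, the direction ``$\overline{\field}_{2,1}(\bx)=0~\Rightarrow~\bx\in\unsweep(\cdot)\cap\Omega_1$'' fails at every point of $\Omega_1\setminus\shape_1$ whose relative trajectory spends positive time inside $\shape_2$. This failure set is in general $\dimm$-dimensional, so it is not excused by the almost-everywhere convention of the lemma's footnote, which forgives only $(\dimm-1)$- or lower-dimensional discrepancies. The equivalence is valid only if one restricts the query point to $\shape_1$, or replaces $\overline{\field}_{2,1}$ by the unmasked field $\field_{2,1}$ of \eq{eq_f1_}; your proof needs to state that restriction explicitly rather than assert that the mask ``plays no role.''
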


\begin{proof}
    The detailed derivation is given in \cite{Behandish2017analytic}. The idea is that by definition, points in a given $\unsweep(\motion, \overline{\shape})$ are defined by containment in $\overline{\shape}$ (i.e., no intersection with $\shape$) under the motion $\motion$. Asserting zero collision measures is a similar criteria, however, the $\dimm-$integrals cannot capture the $\dimm-1$ and lower-dimensional interference (i.e., contact), hence the equivalence is almost everywhere.
\end{proof}

To derive global measures (a single value for each solid) from the above local measures, we can integrate them over the respective solids:
\begin{align}
    \gield^{}_{2,1} &\triangleq \int_{\shape_1} \field_{2,1}(\bx) ~d\mu^\dimm[\bx] = \int_{\Omega_1} \overline{\field}_{2,1}(\bx) ~d\mu^\dimm[\bx], \label{eq_g1} \\
    \gield^{}_{1,2} &\triangleq \int_{\shape_2} \field_{1,2}(\bx) ~d\mu^\dimm[\bx] = \int_{\Omega_2} \overline{\field}_{1,2}(\bx) ~d\mu^\dimm[\bx], \label{eq_g2}
\end{align}
where $d\mu^\dimm[\cdot]$ stands for differential $\dimm-$measure, i.e, the area/volume of an infinitesimal 2D/3D region at $\bx \in \R^\dimm$. The goal of collision-free co-design is to find a ``maximal pair'' of solids $\shape_1 \subseteq \Omega_1$ and $\shape_2 \subseteq \Omega_2$, in a sense that we shall define precisely below, such that $\gield^{}_{\shape_2} = \gield^{}_{\shape_2} = 0$, which is true iff for all $\bx \in \R^\dimm$, $\overline{\field}_{2,1}(\bx) = \overline{\field}_{1,2}(\bx) = 0$.

\begin{defn}
    A pair of solids $(\shape_1, \shape_2)$ are called collision-free under the relative motion $\motion_{1,2} = \motion_{2,1}^{-1}$ if:
    \begin{equation}
        \mu^{\dimm}[\shape_1 \cap \motion_{1,2}\shape_2] = 0 ~\rightleftharpoons~ \mu^{\dimm}[\motion_{2,1}\shape_1 \cap \shape_2] = 0.
    \end{equation}
\end{defn}

\begin{lemma}
    A pair of solids $(\shape_1, \shape_2)$ are collision-free iff the collision measures are zero, i.e., $\gield^{}_{2,1} = 0 ~\rightleftharpoons~ \gield^{}_{1,2} = 0$.
\end{lemma}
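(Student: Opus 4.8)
The plan is to split the statement into two independent pieces: first that the two global measures always coincide, $\gield_{2,1} = \gield_{1,2}$, which instantly yields the equivalence $\gield_{2,1} = 0 \rightleftharpoons \gield_{1,2} = 0$; and second that either vanishing measure is equivalent to the collision-free condition of the Definition. The shared tool for both is to recast the global measures as time-integrals of instantaneous overlaps. Starting from \eq{eq_g1} and applying Tonelli's theorem (legitimate since the integrand is nonnegative and measurable), I would swap the order of integration and use the contra-variance identity $\indic_{\shape_2}(\tau_{2,1}(t)\bx) = \indic_{\tau_{1,2}(t)\shape_2}(\bx)$ to obtain
\begin{equation*}
    \gield_{2,1} = \int_0^1 \mu^\dimm\big[\shape_1 \cap \tau_{1,2}(t)\shape_2\big] \, dt,
\end{equation*}
and symmetrically $\gield_{1,2} = \int_0^1 \mu^\dimm[\shape_2 \cap \tau_{2,1}(t)\shape_1] \, dt$.

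For the symmetry, I would fix $t$ and apply the rigid transformation $\tau_{2,1}(t) = \tau_{1,2}^{-1}(t)$ to the argument of the first integrand. Since a rigid motion distributes over intersection and $\tau_{2,1}(t)\tau_{1,2}(t) = \id$, the displaced set is $\tau_{2,1}(t)\shape_1 \cap \shape_2$; and because every element of $\SE{\dimm}$ preserves $\dimm$-dimensional Lebesgue measure, $\mu^\dimm[\shape_1 \cap \tau_{1,2}(t)\shape_2] = \mu^\dimm[\tau_{2,1}(t)\shape_1 \cap \shape_2]$ for each $t$. Integrating over $t \in [0,1]$ gives $\gield_{2,1} = \gield_{1,2}$ exactly, so one measure vanishes iff the other does.

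It then remains to show $\gield_{2,1} = 0$ iff $(\shape_1,\shape_2)$ is collision-free, i.e. iff $\mu^\dimm[\shape_1 \cap \motion_{1,2}\shape_2] = 0$ where $\motion_{1,2}\shape_2 = \bigcup_{t}\tau_{1,2}(t)\shape_2$ is the sweep. The easy direction is immediate: if the pair is collision-free then each slice obeys $\shape_1 \cap \tau_{1,2}(t)\shape_2 \subseteq \shape_1 \cap \motion_{1,2}\shape_2$, a null set, so every instantaneous overlap vanishes and the time-integral $\gield_{2,1}$ is zero. For the converse, nonnegativity of the integrand together with $\gield_{2,1}=0$ forces $\mu^\dimm[\shape_1 \cap \tau_{1,2}(t)\shape_2] = 0$ for almost every $t$; equivalently, reading Tonelli in the other order, $\field_{2,1}(\bx) = 0$ for almost every $\bx \in \shape_1$, which is the $\overline{\field}_{2,1} = 0$ condition tied to $\unsweep(\motion_{1,2},\overline{\shape}_2)$ by the preceding lemma.

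The main obstacle is precisely this converse step: passing from ``the overlap is null at almost every instant'' to ``the overlap with the entire swept region is null'', since a union of null sets over a continuum of parameters need not be null. I expect to close this gap exactly in the ``almost everywhere / up to regularization'' sense anticipated by the preceding lemma. Because $\shape_2$ is a solid (a regular, semianalytic r-set) and $\tau_{1,2}(\cdot)$ is continuous, any $\bx$ lying in the interior of some $\tau_{1,2}(t_0)\shape_2$ remains inside $\tau_{1,2}(t)\shape_2$ for all $t$ in a neighborhood of $t_0$, so its trajectory accrues positive duration and $\field_{2,1}(\bx) > 0$. Hence the only points of $\shape_1 \cap \motion_{1,2}\shape_2$ with $\field_{2,1}(\bx) = 0$ are those contacted for zero total duration, i.e. met solely on the moving boundary $\partial(\tau_{1,2}(t)\shape_2)$ --- a grazing, $(\dimm-1)$-dimensional locus of $\dimm$-measure zero for semianalytic shapes under such motions. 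Since $\gield_{2,1}=0$ forces $\field_{2,1} = 0$ almost everywhere on $\shape_1$, the positive-duration part of the overlap is null and only this lower-dimensional grazing set remains, giving $\mu^\dimm[\shape_1 \cap \motion_{1,2}\shape_2] = 0$ in precisely the regularized sense of the Definition (detailed in \cite{Behandish2017analytic}). Chaining the three equivalences then yields collision-free $\rightleftharpoons \gield_{2,1}=0 \rightleftharpoons \gield_{1,2}=0$.
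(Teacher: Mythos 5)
Your proposal is correct, and its backbone is the same as the paper's: both arguments hinge on swapping the space and time integrals (Tonelli) so that the global measure becomes a time-integral of instantaneous overlaps, $\gield^{}_{2,1} = \int_0^1 \mu^\dimm[\shape_1 \cap \tau_{1,2}(t)\shape_2]\,dt$. Where you differ is in the organization and in the level of care, and both differences are improvements. First, the paper never states your exact identity $\gield^{}_{2,1} = \gield^{}_{1,2}$ (obtained slice-wise from the fact that rigid motions distribute over intersection and preserve Lebesgue $\dimm$-measure); it simply runs the two measures in parallel, each separately tied to the collision-free condition. Your identity is sharper and yields the equivalence $\gield^{}_{2,1}=0 \rightleftharpoons \gield^{}_{1,2}=0$ for free. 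Second, and more substantively, the paper's proof is a short sketch that rephrases the definition $\mu^\dimm[\shape_1\cap\motion_{1,2}\shape_2]=0$ (an intersection with the \emph{entire sweep}) as ``$\dimm$-measurable interference persisting over a $1$-measurable period of time,'' thereby silently identifying ``almost every time slice is null'' with ``the sweep intersection is null.'' That identification is precisely the continuum-union-of-null-sets gap you flag, and your closing argument --- membership in the interior of some displaced $\shape_2$ persists over an open interval of time by continuity of $\tau_{1,2}(\cdot)$, so any point with $\field_{2,1}(\bx)=0$ can meet the moving solid only through grazing boundary contact, a lower-dimensional set for semianalytic r-sets --- supplies the justification the paper omits, at the same ``almost everywhere, up to regularization'' level of rigor that the paper itself adopts in the preceding lemma and defers to \cite{Behandish2017analytic}. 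In short: same analytical route, but your write-up makes explicit and correct what the paper leaves implicit.
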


\begin{proof}
    The above collision criteria implies interference over a $\dimm-$dimensional region---while contact over lower ($\dimm-1$ and below) dimensional regions is allowed---over a $1-$measurable period of time. Substituting the local measures in \eq{eq_f1_} and \eq{eq_f2_} into the global measures in \eq{eq_g1} and \eq{eq_g2} and swapping the space and time integrals yield:
    \begin{align}
        \gield^{}_{2,1} &= \int_0^1 \int_{\Omega_1} \indic_{\shape_{2,1}(t)}(\bx) \indic_{\shape_1}(\bx) ~d\mu^\dimm[\bx]~dt, \label{eq_subs_g1} \\
        \gield^{}_{1,2} &= \int_0^1 \int_{\Omega_2} \indic_{\shape_{1,2}(t)}(\bx) \indic_{\shape_2}(\bx) ~d\mu^\dimm[\bx]~dt. \label{eq_subs_g2} 
    \end{align}
    At a given snapshot $t \in [0, 1]$, the inner integrals are nonzero iff $\mu^\dimm[\shape_1(t) \cap \shape_2(t)] = 0$ which, by switching the frame of reference to the ones attached to the respective solids, is equivalent to $\mu^\dimm[\shape_1 \cap \shape_{2,1}(t)] = 0$ and $\mu^\dimm[\shape_{1,2}(t) \cap \shape_2] = 0$. The outer integrals are nonzero if such a nonzero $\dimm-$measurable interference persists for a $1-$measurable period of time. 
\end{proof}

The global collision measures of \eq{eq_subs_g1} and \eq{eq_subs_g2} are differentiable with respect to small perturbations of the solids. At a first glance, this might be counter-intuitive as collisions are understood to be ``discrete'' events. However, the measure of collision is continuous with respect to small local changes in the geometry (e.g., infinitesimal topological inclusions), for a given relative motion, as discussed in \ref{app_diff_measure} in greater detail.

\begin{defn}
    A pair of collision-free solids $(\shape_1^\ast, \shape_2^\ast)$ under the relative motion $\motion_{1,2} = \motion_{2,1}^{-1}$ is ``maximal'' if any $\dimm-$measurable growth of either solid confined to the design domain (i.e., $\shape_1^\ast \subseteq \Omega_1$ and $\shape_2^\ast \subseteq \Omega_2$) while keeping the other solid unchanged, makes them no longer collision-free under the same motion.
    
    More precisely, for every other pair of solids $(\shape_1, \shape_2)$:
    \begin{itemize}
        \item if $\shape_1^\ast \subset \shape_1 \subseteq \Omega_1$ then $(\shape_1, \shape_2^\ast)$ is not collision-free; and
        \item if $\shape_2^\ast \subset \shape_2 \subseteq \Omega_2$ then $(\shape_1^\ast, \shape_2)$ is not collision-free.
    \end{itemize}
\end{defn}
The maximal pairs can be more formally defined by imposing a partial order relation $\preccurlyeq$ over the space of all collision-free pairs, where $(\shape_1, \shape_2) \preccurlyeq (\shape_1', \shape_2')$ iff $\shape_1 \subseteq \shape_1'$ and $\shape_2 = \shape_2'$ or $\shape_1 = \shape_1'$ and $\shape_2 \subseteq \shape_2'$. A pair $(\shape_1^\ast, \shape_2^\ast)$ is maximal if $(\shape_1^\ast, \shape_2^\ast) \preccurlyeq (\shape_1', \shape_2')$ implies $(\shape_1^\ast, \shape_2^\ast) = (\shape_1', \shape_2')$. 

Notice that for a maximal pair of collision-free solids $(\shape_1^\ast, \shape_2^\ast)$, if one solid is grown by the slightest bit, the only way to keep the pair collision-free for the same motion scenario is to shrink the other solid. This observation provides us with an incremental shape modification strategy to traverse a family of maximal pairs, akin to walking along a Pareto front of maximality. If we picture the design space of all pairs of solids within the domains $\Omega_1, \Omega_2 \subseteq \R^\dimm$, the maximal front is a higher-dimensional manifold that bounds the feasible design subspace.

\begin{coro}
    Given a pair of ``initial designs'' $\shape_1 \subseteq \Omega_1$ and $\shape_2 \subseteq \Omega_2$ that collide under one-parametric motions $\motion_1, \motion_2 \subset \SE{3}$, the following two pairs of maximally collision-free solids can be constructed via unsweep:
    \begin{align}
        (\shape_1, \shape^\ast_2), \quad\text{where}~ \shape^\ast_2 \triangleq \unsweep(\motion_{1,2}, \overline{\shape}_1) \cap \Omega_2, \label{eq_unsweep_1} \\
        (\shape^\ast_1, \shape_2), \quad\text{where}~ \shape^\ast_1 \triangleq \unsweep(\motion_{2,1}, \overline{\shape}_2) \cap \Omega_1, \label{eq_unsweep_2}
    \end{align}
    Note that these are extreme cases:
    \begin{itemize}
        \item $\shape_2^\ast$ is the maximal solid that does not collide with $\shape_1$, assuming the shape of $\shape_1$ is fixed.
        \item $\shape_1^\ast$ is the maximal solid that does not collide with $\shape_2$, assuming the shape of $\shape_2$ is fixed.
    \end{itemize}
    However, there are uncountably many other pairs $(\shape_1', \shape_2')$ in between the two extreme cases, if we allow both shapes to change. All of them can be captured by simultaneously satisfying $\gield^{}_{2',1'} = \gield^{}_{1',2'}= 0$.
\end{coro}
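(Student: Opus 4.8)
The plan is to derive this as a corollary of the unsweep characterization recalled in Section~\ref{sec_prelim} together with Lemma 1 and Lemma 2, rather than from first principles. First I would check that each constructed pair is collision-free. For $(\shape_1, \shape_2^\ast)$ with $\shape_2^\ast = \unsweep(\motion_{1,2}, \overline{\shape}_1) \cap \Omega_2$, the inclusion $\shape_2^\ast \subseteq \unsweep(\motion_{1,2}, \overline{\shape}_1)$ and the equivalence recalled in the preliminaries (that $\motion_{1,2}\shape_2 \cap \shape_1 = \emptyset$ iff $\shape_2 \subseteq \unsweep(\motion_{1,2}, \overline{\shape}_1)$) give $\mu^\dimm[\motion_{1,2}\shape_2^\ast \cap \shape_1] = 0$, which is collision-freeness in the sense of Definition 1 and, by Lemma 2, is equivalent to $\gield^{}_{2^\ast,1} = \gield^{}_{1,2^\ast} = 0$. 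Interchanging the two solids and swapping $\motion_{1,2} \leftrightarrow \motion_{2,1}$ handles $(\shape_1^\ast, \shape_2)$ identically.

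Next I would prove the ``extreme case'' claims, i.e.\ the one-sided maximality asserted in the two bullet points. Holding $\shape_1$ fixed, any $\dimm$-measurable growth $\shape_2' \supsetneq \shape_2^\ast$ confined to $\Omega_2$ must contribute a set of positive $\dimm$-measure inside $\Omega_2 \setminus \unsweep(\motion_{1,2}, \overline{\shape}_1)$; by the same equivalence the trajectories of those added points meet $\shape_1$ over a $1$-measurable time span, so $(\shape_1, \shape_2')$ is no longer collision-free. This is exactly the second clause of Definition 2, and the mirror argument establishes the first clause for $(\shape_1^\ast, \shape_2)$. The step that needs the most care is reconciling the exact set-theoretic unsweep membership with the ``almost everywhere'' measure statements of Lemma 1: I would invoke Lemma 1 to pass between vanishing of $\overline{\field}_{2,1}$ and unsweep membership, and the r-set regularity of the solids to guarantee that the forbidden growth produces a genuine positive-$\dimm$-measure overlap rather than mere lower-dimensional contact.

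Finally, to exhibit the uncountably many intermediate pairs, I would build an explicit one-parameter continuum by continuously shrinking one solid and re-maximizing the other via unsweep. For $\epsilon \in [0, \epsilon_0]$ let $\shape_1^{(\epsilon)} \triangleq \ero(\shape_1, \epsilon) \subseteq \Omega_1$ be a continuous erosion of $\shape_1$ and set $\shape_2^{(\epsilon)} \triangleq \unsweep(\motion_{1,2}, \overline{\shape}_1^{(\epsilon)}) \cap \Omega_2$. By the first step every pair $(\shape_1^{(\epsilon)}, \shape_2^{(\epsilon)})$ satisfies $\gield^{}_{2',1'} = \gield^{}_{1',2'} = 0$; shrinking $\shape_1^{(\epsilon)}$ enlarges the unsweep of its complement, so the map is strictly monotone and distinct values of $\epsilon$ give $\dimm$-measurably distinct pairs, recovering $(\shape_1, \shape_2^\ast)$ at $\epsilon = 0$ and evolving toward the opposite extreme as $\shape_1$ shrinks. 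Since $\epsilon$ ranges over a continuum, this yields uncountably many distinct collision-free pairs. I expect the main obstacle to lie not in this construction but in the maximality reconciliation of the second step; I would also flag, for honesty, that the two unsweep-generated pairs are ``extreme'' only in the one-sided sense of Definition 2's two clauses (each can fail to be jointly maximal, since one solid may still admit collision-free growth against its unsweep partner), and that reaching genuinely two-sided maximal pairs strictly between them is precisely the task of the incremental co-generation procedure of Section~\ref{sec_method}.
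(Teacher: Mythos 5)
Your first two steps are correct and are essentially the paper's own justification: the paper offers no separate proof of this corollary, treating it as an immediate consequence of the unsweep maximality property recalled in Section~\ref{sec_prelim} (non-collision iff $\shape_2 \subseteq \unsweep(\motion_{1,2},\overline{\shape}_1)$) together with Lemma~2, which converts collision-freedom into vanishing of the global measures $\gield^{}_{2,1}, \gield^{}_{1,2}$. Your closing caveat---that the two constructed pairs are ``maximal'' only in the one-sided sense of the corollary's bullets, since Definition~2 would additionally require that growing the \emph{fixed} solid against its unsweep partner also create collision---is a fair reading of a looseness the paper leaves implicit.

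The gap is in your third step, and it matters because the ``uncountably many pairs in between'' are, in context, \emph{maximal} pairs: Definition~2, the rejection of the third (non-maximal) pair in Fig.~\ref{fig_disks}, and the $\gamma$-parameterized ``family of maximal collision-free pairs'' introduced immediately after the corollary all concern maximality, not mere collision-freedom, and the paper deliberately defers their construction to the optimization procedure of Section~\ref{sec_method} because no explicit formula is available. Your erosion pairs $\bigl(\ero(\shape_1,\epsilon),\, \shape_2^{(\epsilon)}\bigr)$ with $\shape_2^{(\epsilon)} \triangleq \unsweep(\motion_{1,2},\overline{\ero(\shape_1,\epsilon)}) \cap \Omega_2$ are collision-free but generally \emph{not} maximal: the eroded solid is typically a strict subset of $\unsweep(\motion_{2,1},\overline{\shape}_2^{(\epsilon)}) \cap \Omega_1$, so it can still be grown without creating collision---exactly the defect of the rejected pair in Fig.~\ref{fig_disks}. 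Your path also does not tend to the second extreme: as $\epsilon$ grows, $\shape_2^{(\epsilon)}$ tends to $\Omega_2$, not to the initial design $\shape_2$. Both defects are repairable by closing each pair under the unsweep adjunction. Write $F(X) \triangleq \unsweep(\motion_{1,2},\overline{X}) \cap \Omega_2$ and $G(Y) \triangleq \unsweep(\motion_{2,1},\overline{Y}) \cap \Omega_1$; since collision-freedom of $(X,Y)$ is equivalent to $Y \subseteq F(X)$ and to $X \subseteq G(Y)$, the maps $F,G$ form an antitone Galois connection, so $F \circ G \circ F = F$. Setting $\shape_2' \triangleq F(\ero(\shape_1,\epsilon))$ and $\shape_1' \triangleq G(\shape_2')$ yields $F(\shape_1') = \shape_2'$ and $G(\shape_2') = \shape_1'$ simultaneously, which is precisely two-sided maximality in the sense of Definition~2 (up to the almost-everywhere caveats of Lemma~1). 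What then still needs an argument is that distinct $\epsilon$ produce measurably distinct \emph{closed} pairs: the closure operator $G \circ F$ can collapse distinct erosions (for instance, erosions that differ only in regions of $\Omega_1$ unreachable by the motion of points of $\Omega_2$), so uncountability must be established after closure, via monotonicity plus a genericity hypothesis, not before.
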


We can traverse the family of maximal collision-free pairs $(\shape_1', \shape_2')$ using a hyper-parameter $\gamma\in [0, 1]$ such that:
\begin{align}
    (\shape_1', \shape_2')_{\gamma} \big|_{\gamma := 0} = (\shape_1, \shape^\ast_2), \\
    (\shape_1', \shape_2')_{\gamma} \big|_{\gamma := 1} = (\shape^\ast_1, \shape_2), 
\end{align}
while $(\shape_1', \shape_2')_{\gamma}$ for $\gamma \in (0, 1)$ produces other maximal pairs in between. Although we cannot provide an explicit formula for $(\shape_1', \shape_2')_{\gamma}$, we present an optimization-based approach to compute the pair in Section \ref{sec_method}.

Note that the uncountable family of all maximal pairs cannot be parameterized with a single (or a number of) parameters. Our goal here is capture a specific subset of the family with a meaningful hyper-parameter---namely, an enforced volumetric ratio between the shapes of maximized volume---that serves as a ``knob'' to navigate other tradeoffs for design space exploration.

\subsection{Computational Discretization} \label{sec_disc} \label{sec_comp}

The above formulation is representation-agnostic and can be used with a variety of different representation schemes (e.g., B-reps, mesh, and voxels), as long as they support computation of $\dimm-$integrals---and if they do not, one can approximate $\dimm-$integrals by sampling the solids using quadrature rules, as long as the representation schemes support basic membership classification queries. Here, we select a simple asymmetric discretization approach to illustrate the practicality of our formulation.

To enable finite representation of solids and digital computation of collision measures, let us approximate the indicator function of a stationary solid $\shape \subseteq \R^\dimm$ using a finite volume (FV) scheme over a grid, meaning that each grid cell is associated with densities (i.e., $\dimm-$measure fractions) of the solid entrapped within the cell, denoted by $\rho^{}_{\shape,i} \in [0, 1]$ for $i = 1, 2, \ldots, n$. For a moving solid $\shape \subseteq \R^\dimm$, on the other hand, let us use a finite sample (FS) scheme over the same grid, meaning that each cell center is associated with the same density values like a ``lumped'' measure. A rigorous application of these discretization schemes to approximate the collision integrals by sums in a $\dimm-$measure-preserving fashion over any subset of the grid cell is presented in \ref{app_sec_disc}. The resulting sums are:
\begin{align}
    \gield^{}_{2,1} &\approx \sum_{i_1=1}^{n_1}\sum_{i_2=1}^{n_2} \rho^{}_{\shape_1,i_1}\rho^{}_{\shape_2,i_2} \weight^{1,2}_{i_1, i_2}, \label{eq_g1_finite} \\
    \gield^{}_{1,2} &\approx \sum_{i_1=1}^{n_1}\sum_{i_2=1}^{n_2} \rho^{}_{\shape_1,i_1}\rho^{}_{\shape_2,i_2} \weight^{2,1}_{i_2, i_1}, \label{eq_g2_finite} 
\end{align}
where the weights $\weight^{1,2}_{i_1, i_2}$ and $\weight^{2,1}_{i_2, i_1}$ mean the following:
\begin{itemize}
    \item $\weight^{1,2}_{i_1, i_2}$ measures how long the grid vertex $\bx^{}_{i_2} \in \Omega_2$ of the moving grid (attached to $\shape_2 \subseteq \Omega_2$) stays within the $\dimm-$cell $\cell_{i_1}$ under the motion $\tau_{1,2}(t)$ for $t \in [0, 1]$.
    \item $\weight^{2,1}_{i_2, i_1}$ measures how long the grid vertex $\bx^{}_{i_1} \in \Omega_1$ of the moving grid (attached to $\shape_1 \subseteq \Omega_1$) stays within the $\dimm-$cell $\cell_{i_2}$ under the motion $\tau_{2,1}(t)$ for $t \in [0, 1]$.
\end{itemize}
For further details on the FV and FS discretization schemes, see \ref{app_sec_comp}.
    
The finite approximations of the collision measures in \eq{eq_g1_finite} and \eq{eq_g2_finite} can be written as matrix equations:
\begin{align}
    \gield^{}_{2,1} &\approx \big[\rho^{}_{\shape_1,i_1} \big]^\mathrm{T} \big[ \weight^{1,2}_{i_1, i_2} \big] \big[ \rho^{}_{\shape_2,i_2} \big], \label{eq_g1_matrix} \\
    \gield^{}_{1,2} &\approx \big[\rho^{}_{\shape_2,i_2} \big]^\mathrm{T} \big[ \weight^{2,1}_{i_2, i_1} \big] \big[ \rho^{}_{\shape_1,i_1} \big], \label{eq_g2_matrix} 
\end{align}
The two arrays $[\rho^{}_{\shape_1,i_1}]_{n_1 \times 1}$ and $[\rho^{}_{\shape_2,i_2}]_{n_2 \times 1}$ are discrete representations of the two solids, i.e., the {\it design variables}. Importantly, the two matrices $[ \weight^{1,2}_{i_1, i_2} ]_{n_1 \times n_2}$ and $[ \weight^{2,1}_{i_2, i_1} ]_{n_2 \times n_1}$ do {\it not} depend on the designs. The weights can be viewed as {\it pairwise correlations} between primal grid nodes of a moving grid and dual grid cells of a stationary grid, which depend solely on the relative motion of the grids and the grid structure. 
The matrices can thus be pre-computed offline and reused in iterative design of the two solids. This property is critical for computational tractability of iterative design optimization in Section \ref{sec_method}, given that computing the collision measures for arbitrarily complex shapes and motions can be prohibitive in an iterative loop running hundreds or thousands of iterations.

Figure \ref{fig_discrete} illustrates how the stationary and moving solids are discretized via FV and FS, respectively, and how the correlation matrix entries are computed in practice. 

\begin{figure}[ht]
	\centering
	\includegraphics[width=0.5\textwidth]{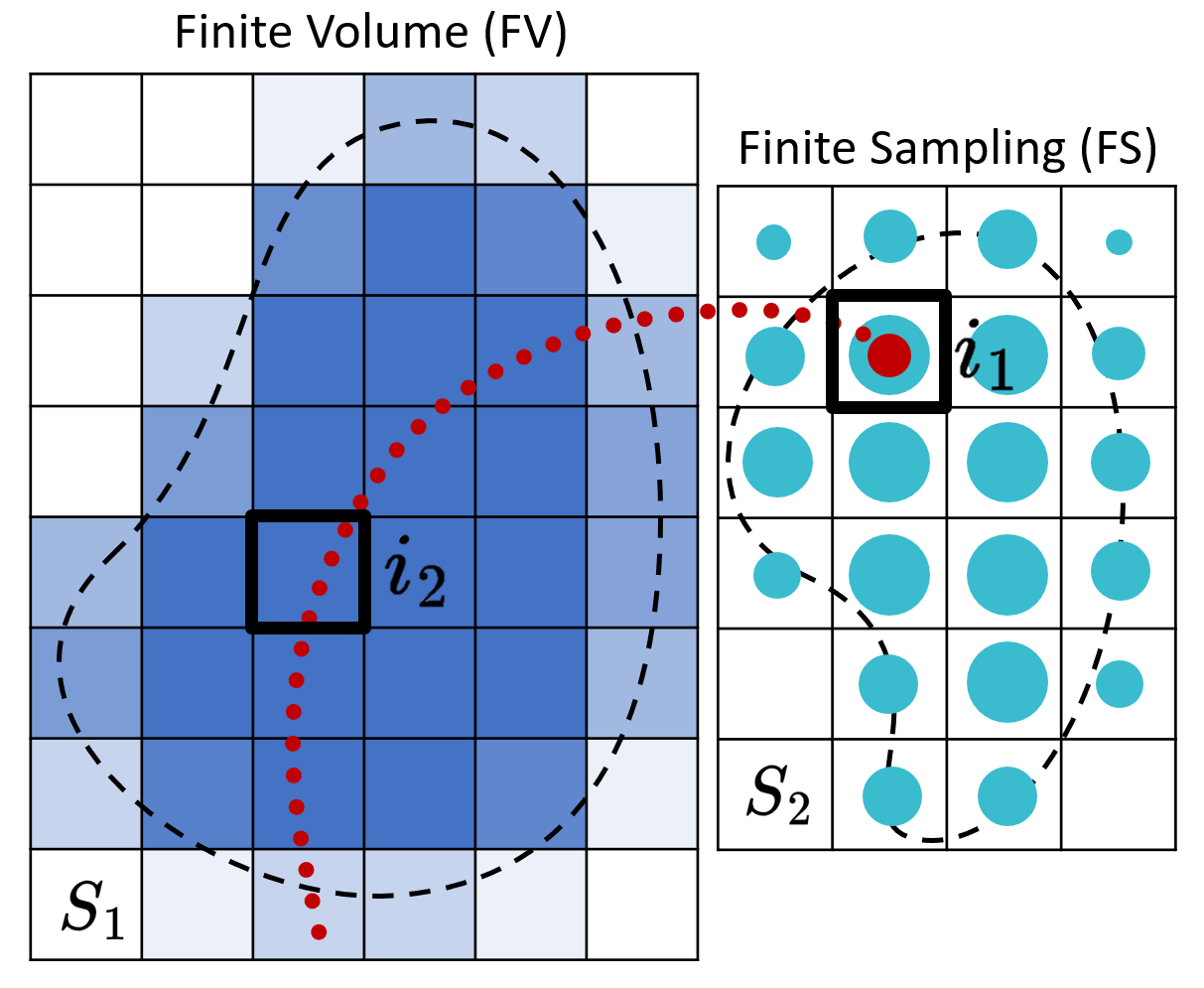}
	\caption{The stationary solid (in this case, $\shape_1$) is discretized via FV while the moving solid (in this case, $\shape_2$) is discretized via FS. The correlation matrix entry $\weight^{2,1}_{i_2, i_1}$ is computed by counting the number of times the discretized trajectory of the cell center at $i_1$ of the stationary grid stays within the cell at $i_2$ of the moving grid.}    
	\label{fig_discrete}
\end{figure}

To enable co-design of $[\rho^{}_{\shape_1,i_1}]_{n_1 \times 1}$ and $[\rho^{}_{\shape_2,i_2}]_{n_2 \times 1}$ subject to collision avoidance constraints, the violation of such constraints (i.e., the collision measures) must be differentiated with respect to the design variables. The resulting discrete sensitivity fields associated with the respective grids are computed using a chain rule:
\begin{align}
    \left[ \frac{\partial \gield^{}_{2,1}}{\partial \rho_{\shape_1, i_1}} \right] &\approx \big[ \weight^{1,2}_{i_1, i_2} \big] \big[ \rho^{}_{\shape_2,i_2} \big], ~
    \left[ \frac{\partial \gield^{}_{2,1}}{\partial \rho_{\shape_2, i_2}} \right] \approx \big[ \weight^{1,2}_{i_1, i_2} \big]^\mathrm{T} \big[ \rho^{}_{\shape_1,i_1} \big], \label{eq_g1_matrix_12} \\
    \left[ \frac{\partial\gield^{}_{1,2}}{\partial \rho_{\shape_1, i_1}} \right] &\approx \big[ \weight^{2,1}_{i_2, i_1} \big] \big[ \rho^{}_{\shape_1,i_1} \big], ~
    \left[ \frac{\partial\gield^{}_{1,2}}{\partial \rho_{\shape_2, i_2}} \right] \approx \big[ \weight^{2,1}_{i_2, i_1} \big]^\mathrm{T} \big[ \rho^{}_{\shape_2,i_2} \big], \label{eq_g2_matrix_21}
\end{align}
See \ref{app_diff_measure} for a definition of the topological sensitivity fields (TSFs), defined for the continuum (i.e., in terms of the indicator functions), corresponding to $\epsilon \to 0^+$.

\section{Generation of collision-free Geometries} \label{sec_method}
    
In this section, we present an optimization based formulation (Section \ref{sec_opt}) that allows both collision-free solids to evolve in a controlled manner, using the collision measures developed in the previous section.
    
\subsection{Optimization Problem Formulation} \label{sec_opt}

Using collision measures as constraints is not sufficient to define a well-posed optimization problem, because every subset of the maximal solids is also collision-free (despite not maximal). In practice, attempting to minimize the collision measures by gradient-descent optimization can lead to useless designs where the solids are excessively shrunk to avoid collision without maintaining contact. Figure \ref{fig_disks} shows an example in 2D; we begin with a stationary disk and a rotating disk around its center, and can satisfy collision avoidance with uncountably many solutions, adding/removing material from either disk. However, only two of them (I and II) are maximal pairs while (III) is not. 
Therefore, additional constraints are needed to achieve maximality.
    
\begin{figure}[ht]
   	\centering
   	\includegraphics[width=0.45\textwidth]{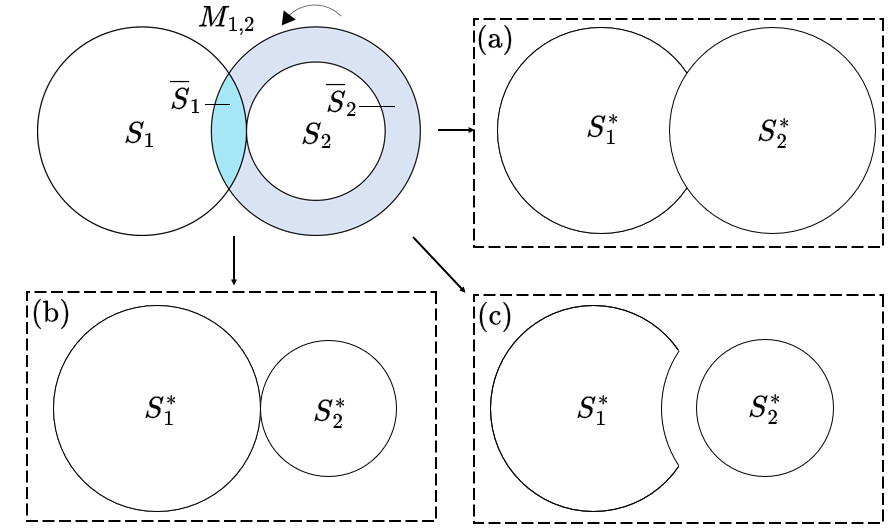}
   	\caption{Two initial designs $S_{1}$ and  $S_{2}$ are shown (top left), where $S_{2}$ moves relative to $S_{1}$ by a rotation about the center of $S_{2}$. Three collision-free pairs of solids are shown for this motion, where in (a) and (b) the pairs are maximal while in (c) they are not. Note that $S_{2}$ is the entire disk on the right including the white and teal regions.}    
   	\label{fig_disks}
\end{figure}
    
First, observe that we can partition the initial designs $\shape_1 \subseteq \Omega_1$ and $\shape_2 \subseteq \Omega_2$ into two subdomains each; namely, 
\begin{itemize}
	\item The {\it initially colliding} subdomains $\widehat{\shape}_1$ and $\widehat{\shape}_2$ are:
	\begin{align}
	\widehat{\shape}_1 \triangleq \big\{ \bx \in \shape_1 ~|~ \overline{\field}_{2,1}(\bx)>0 \big\}, \\
    \widehat{\shape}_2 \triangleq \big\{ \bx \in \shape_2 ~|~ \overline{\field}_{1,2}(\bx)>0 \big\},
	\end{align}
	\item The {\it initially collision-free} subdomains $\widetilde{\shape}_1$ and $\widetilde{\shape}_2$ are:
    \begin{align}
		\widetilde{\shape}_1 \triangleq \big\{ \bx \in \shape_1 ~|~ \overline{\field}_{2,1}(\bx)=0 \big\}, \\
        \widetilde{\shape}_2 \triangleq \big\{ \bx \in \shape_2 ~|~ \overline{\field}_{1,2}(\bx)=0 \big\}.
	\end{align}
\end{itemize}
Note that $\shape_1 = \widehat{\shape}_1 \cup \widetilde{\shape}_1$ and $\shape_2 = \widehat{\shape}_2 \cup \widetilde{\shape}_2$. We only need to eliminate the collisions by modifying the latter.

The simplest possible way to make the optimization problem well-posed is to pick an objective function that pushes the collision avoidance constraints to become {\it active}, i.e., the resulting pairs of solids become maximally collision-free. We observed that maximizing the $\dimm-$measure of both design subdomains (e.g., area in 2D and volume in 3D) is effective in many scenarios---although more sophisticated objective functions such as contact \cite{Lysenko2016effective} and complementarity \cite{Behandish2017shape} measures can be adopted in future work. Moreover, to control the choice of maximal pairs from a one-parametric family using a meaningful hyper-parameter $\gamma \in [0, 1]$, we add another constraint that enforces a $\dimm-$measure ratio between the two solids \eqref{generalForm_const3}. The optimization problem is thus formulated as finding $\widehat{\shape}_1 \subseteq \Omega_1$ and $\widehat{\shape_2} \subseteq \Omega_2$ to:
\begin{subequations} \label{eq_TOproblem}
	\begin{align}
		\mathop{\text{Maximize}}\limits_{\widehat{\shape}_1, \widehat{\shape_2}} \quad &
		f(\widehat{\shape}_1, \widehat{\shape}_2) = \mu^\dimm[\widehat{\shape}_1] + \mu^\dimm[\widehat{\shape}_2], \label{generalForm} \\
		\text{subject to} \quad 
		& {\gield}_{\widehat{2}, \widehat{1}} = 0, \label{generalForm_const1} \\
		& {\gield}_{\widehat{1}, \widehat{2}} = 0, \label{generalForm_const2} \\ 
		& g(\widehat{\shape}_1, \widehat{\shape}_2) = \gamma\mu^\dimm[\widehat{\shape}_1] - (1-\gamma)\mu^\dimm[\widehat{\shape}_2] = 0 \label{generalForm_const3} 
	\end{align}
\end{subequations}
For gradient-descent optimization, we define a Lagrangian from the general formulation:
\begin{equation}
	\mathcal{L}(\widehat{\shape}_1, \widehat{\shape}_2) = f(\widehat{\shape}_1, \widehat{\shape}_2) + \lambda_1\gield_{\widehat{2}, \widehat{1}} + \lambda_2\gield_{\widehat{1}, \widehat{2}} + \lambda_3 g(\widehat{\shape}_1, \widehat{\shape}_2).
	\label{generalLagrangian}
\end{equation}
Using the discretization scheme presented in Section \ref{sec_disc}, both $\widehat{\shape}_1 \subseteq \Omega_1$ and $\widehat{\shape_2} \subseteq \Omega_2$ can be represented by the density arrays $[\rho^{}_{\widehat{\shape}_1}]_{n_1 \times 1}$ and $[\rho^{}_{\widehat{\shape}_2}]_{n_2 \times 1}$, with pairwise collisions, $\big[\widehat{ \weight}^{1,2}_{i_1, i_2} \big]$ and $\big[\widehat{ \weight}^{2,1}_{i_2, i_1} \big]$ hence:
\begin{align}
	f (\widehat{\shape}_1, \widehat{\shape}_2) &\approx \epsilon^\dimm \big\| \rho^{}_{\widehat{\shape}_1}\big\|_1 + \epsilon^\dimm \big\| \rho^{}_{\widehat{\shape}_2}\big\|_1,\\
	g (\widehat{\shape}_1, \widehat{\shape}_2) &\approx \gamma \epsilon^\dimm \big\| \rho^{}_{\widehat{\shape}_1}\big\|_1 - (1-\gamma) \epsilon^\dimm \big\| \rho^{}_{\widehat{\shape}_2}\big\|_1,
\end{align}
where $\|\cdot\|_1$ is the $L_1-$norm, i.e., $\| \rho^{}_{\widehat{\shape}_1}\|_1$ and $\| \rho^{}_{\widehat{\shape}_2}\|_1$ are sum of non-negative density values in each array, approximating the $\dimm-$measure of the solids they represent:
\begin{align}
	\big\| \rho^{}_{\widehat{\shape}_1} \big\|_1 &\triangleq \sum_{i_1 = 1}^{n_1} \big| \rho^{}_{\widehat{\shape}_1, i_1} \big| = \sum_{i_1 = 1}^{n_1} \rho^{}_{\widehat{\shape}_1, i_1}, \\
	\big\| \rho^{}_{\widehat{\shape}_2} \big\|_1 &\triangleq \sum_{i_2 = 1}^{n_2} \big| \rho^{}_{\widehat{\shape}_2, i_2} \big| = \sum_{i_2 = 1}^{n_2} \rho^{}_{\widehat{\shape}_2, i_2}.
\end{align}
    
Without loss of generality, we can eliminate $\epsilon := 1$ from all equations, given that our approach is scale-agnostic. Substituting for the collision measures in \eq{eq_g1_matrix} and \eq{eq_g2_matrix} as well as the above norms in \eq{generalLagrangian}, we obtain:
\begin{align}
	\mathcal{L}(\widehat{\shape}_1, \widehat{\shape}_2) &\approx (1+\lambda_3 \gamma) \big\| \rho^{}_{\widehat{\shape}_1}\big\|_1 + (1-\lambda_3(1-\gamma)) \big\| \rho^{}_{\widehat{\shape}_2}\big\|_1 \nonumber \\
	&+ \lambda_1\big[\rho^{}_{\widehat{\shape}_1,i_1} \big]^\mathrm{T} \big[ \widehat{\weight}^{1,2}_{i_1, i_2} \big] \big[ \rho^{}_{\widehat{\shape}_2,i_2} \big] \nonumber \\
	&+ \lambda_2\big[\rho^{}_{\widehat{\shape}_2,i_2} \big]^\mathrm{T} \big[\widehat{\weight}^{2,1}_{i_2, i_1} \big] \big[ \rho^{}_{\widehat{\shape}_1,i_1} \big].
	\label{approxLagrangian}
\end{align}
All of the quantities in the above Lagrangian and its partial derivatives with respect to design variables can be readily computed through linear-algebraic operations, which scale well for parallel computing on CPU/GPU, although we have not implemented such parallelization in this paper. 
    
\subsection{Iterative Optimization Algorithm} \label{sec_alg}

To solve the optimization problem an iterative approach is utilized where the method of moving asymptotes (MMA) \cite{svanberg1987method} is utilized. Algorithm \ref{alg_TO} provides a description of the approach. 
		
\begin{algorithm} [ht!]
\caption{Co-generation of maximal collision-free solids}
	\begin{algorithmic}
		\Procedure{Co-generation}{$S_1, S_2, M_{1,2}, M_{2,1}, \gamma$}
		\State Initialize $[\rho_{\shape_1}] \gets [\indic_{S_1}]$
		\State Initialize $[\rho_{\shape_2}] \gets [\indic_{S_2}]$
		\State Initialize $\Delta \gets \infty$
		\State Initialize $iter \gets 0$
		\State $(\big[ \weight^{1,2}_{i_1, i_2} \big],\big[ \weight^{2,1}_{i_2, i_1} \big]) \gets \Call{PrecompCo}{\shape_1, \shape_2, M_{1,2}, M_{2,1}}$
		\State $([\widehat{\rho}],\big[ \widehat{\weight}^{1,2}_{i_1, i_2} \big],\big[\widehat{ \weight}^{2,1}_{i_2, i_1} \big])\hspace{-0.25em} \gets \hspace{-0.25em}\Call{ColVox}{[{\rho}],\big[ \weight^{1,2}_{i_1, i_2} \big],\big[ \weight^{2,1}_{i_2, i_1} \big]}$
		\While {$\Delta > \delta$ \textbf{and} $iter < l$ }
		\State $f \gets \Call{\text{Evaluate}}{[\widehat{\rho}]}$ \Comment{Obj. func.}
		\State $\frac{\partial \mathcal{L}_{[\widehat{\rho}]}}{\partial \widehat{\rho}} \gets
			\Call{\text{Gradient}}{[f,\widehat{\rho},\big[ \widehat{\weight}^{1,2}_{i_1, i_2} \big],\big[ \widehat{\weight}^{2,1}_{i_2, i_1} \big]}$\Comment{Sens.}
		\State $[\widehat{\rho}^{\text{new}}] \gets
			\Call{\text{Update}}{[\widehat{\rho}], \frac{\partial \mathcal{L}_{[\widehat{\rho}]}}{\partial \widehat{\rho}}} $ 
			\Comment{Gradient Update}
		\State $\Delta \gets \Call{Integrate}{[\widehat{\rho}^{\text{new}}]
			- [\widehat{\rho}]}$ \Comment{Vol. diff.}
		\State $iter \gets iter + 1$ \Comment{Iter.  counter}
		\State $[\widehat{\rho}] \gets [\widehat{\rho}^{\text{new}}]$
			\Comment{For next iteration}
		\EndWhile
		\State\Return{$[\widehat{\rho}]$}
		\EndProcedure 
	\end{algorithmic} \label{alg_TO}
\end{algorithm}

With the optimization formulated, we provide various examples of co-generation of collision-free solids.

\section{Results \& Discussion} \label{sec_results}

To demonstrate the efficacy of the approach, we provide a variety of 2D and 3D examples. All examples are run on a desktop machine with Intel\textsuperscript{\textregistered} CoreTM i7-9800X CPU with 16 processors running at 3.8 GHz, 32 GB of host memory, and an NVIDIA\textsuperscript{\textregistered} GeForce\textsuperscript{\textregistered} GTX 1080 GPU with 2,560 CUDA cores and 8 GB of device memory.

\subsection{2D Counter Rotating Squares}
The first example for the algorithm is two counter rotating squares where the angular position of the squares are provided by $\theta_1 \in [0, +2\pi)$ and $\theta_2 \in [0,-2\pi)$ (counter-clockwise) where the motion specified by $\theta_1 = -\theta_2 = \omega t$ with constant angular velocity $\omega > 0$. The dimensions and initial angular positions of the squares are shown in Fig. \ref{fig_rot_disks} (top). Additionally, the normalized local collision measures are provided in Fig. \ref{fig_rot_disks} (bottom) to show how initially the two shapes collide. The temporal resolution is $500$ time steps and the spatial resolution is $400 \times 400$ pixels.

\begin{figure}[h]
	\centering
	\includegraphics[width=0.5\textwidth]{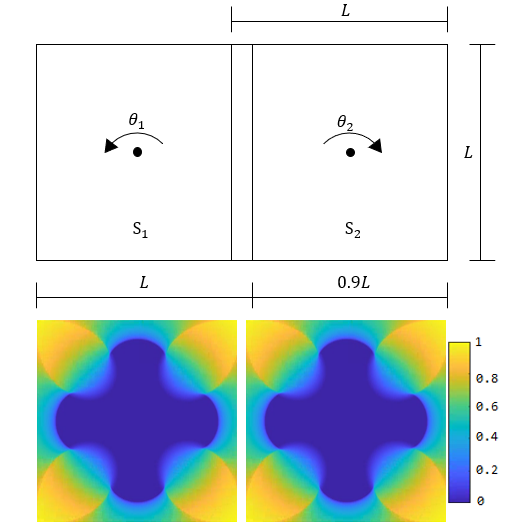}
	\caption{Two initial design (top), $S_{1}$ and  $S_{2}$, where each solid rotate around their centers with the same angular speed in opposite directions. The local collision measure for each domain is shown (bottom).} 
	\label{fig_rot_disks}
\end{figure}

The co-generation procedure was applied for $\gamma$ varying from 0 to 1 in 0.1 increments. To summarize the results, Fig. \ref{fig_gear_plot} provides a plot of the sum of the $\dimm-$measure of each co-generated solid as a function of $\gamma$. Selected solutions co-generated by the procedure are included as well. The solids are thresholded at $[{\rho}_{\shape_1,i_1}],[{\rho}_{\shape_2,i_2}] >0.5$.

\begin{figure} [h]
	\centering
	\includegraphics[width=0.5\textwidth]{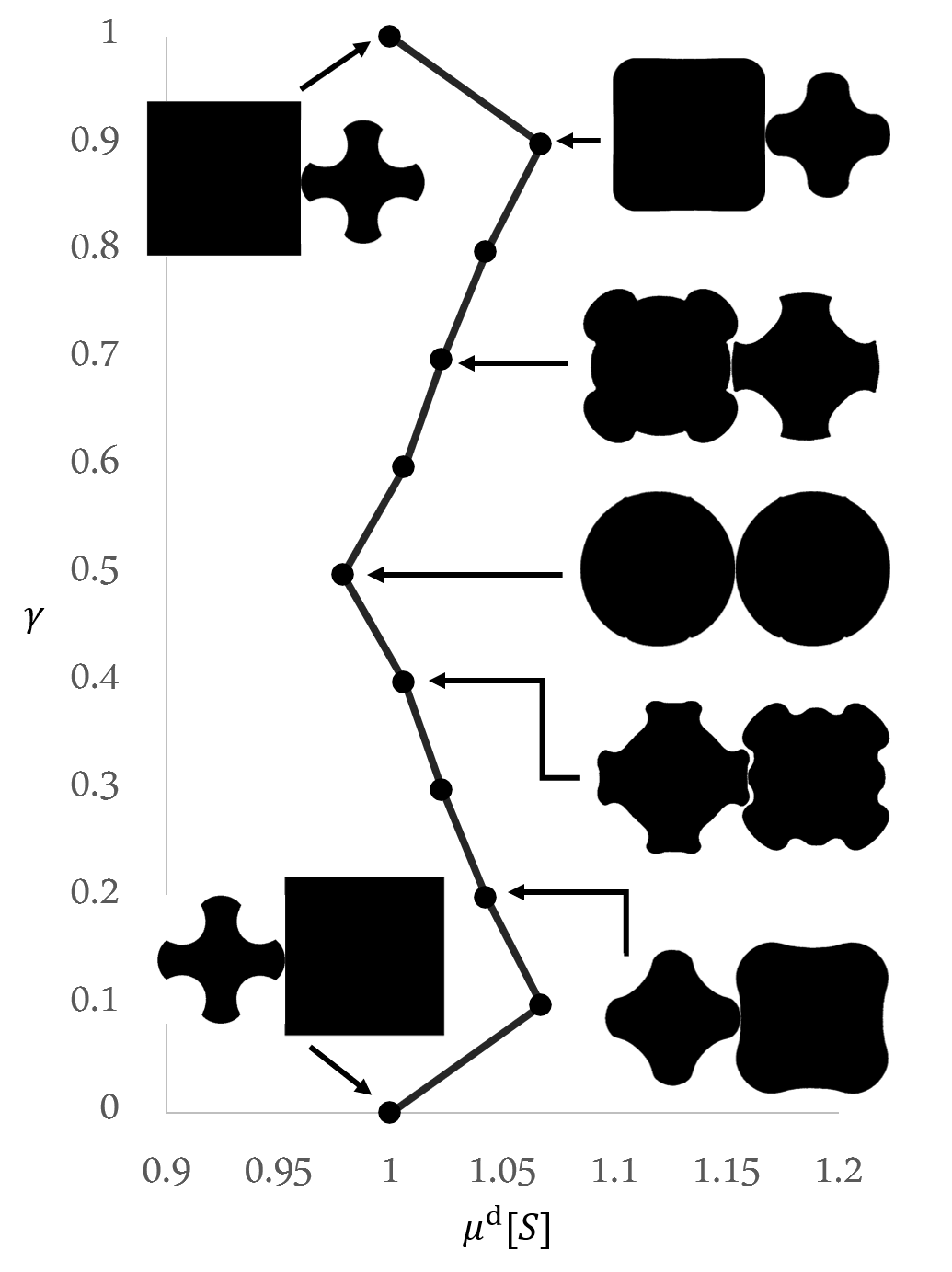}
	\caption{Sum of the $\dimm-$measures of the two collision-free solids as a function of hyper-parameter $\gamma$.} 
	\label{fig_gear_plot}
\end{figure}
    
Convergence plots are shown in Fig. \ref{fig_gear_conv} for $\gamma = 0.4$ and $\gamma = 0.5$ to demonstrate the convergence of the measure during incremental optimization.

\begin{figure}[h]
	\centering
	\includegraphics[width=0.5\textwidth]{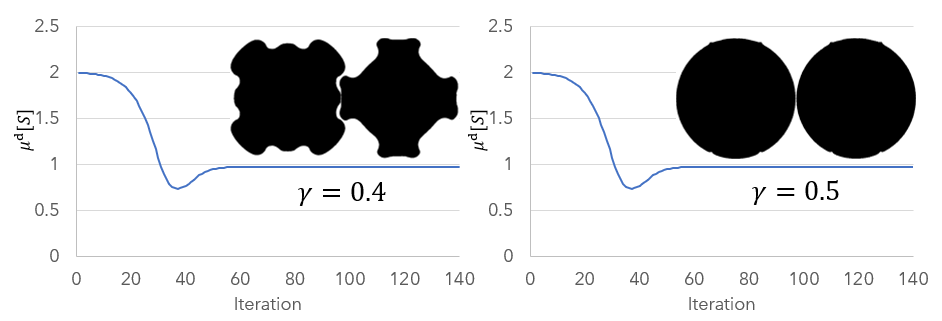}
	\caption{Convergence of summed $\dimm-$measures of the two collision-free solids for $\gamma = 0.4$ (left) and $\gamma = 0.5$ (right).} 
	\label{fig_gear_conv}
\end{figure}

Due to the symmetry of the trajectories and domains, the total weighted area as a function of $\gamma$ is symmetric about $\gamma = 0.5$. The solids also reflect this symmetry. Additionally, the results of $\gamma = 0$ and $\gamma = 1$ precisely match the results that would be obtained if the unsweep operations in \eq{eq_unsweep_1} and \eq{eq_unsweep_2} were performed. 

\subsection{2D Cam and Follower}

The second example is a cam/follower system. The cam, which is initially a square of length $L$, is prescribed to rotate $2\pi$ radians about its center (the origin $O$) at a constant angular velocity, while the follower translates in the vertical direction. The vertical position $y_{\mathrm{F}}$ of the center of the follower is prescribed as a function of the angular position $\theta_{C}$ of the cam:
\begin{equation}
    \label{followerEquation}
    y_{\mathrm{F}} = {{3L}\over{4}} + {{L}\over{8}}{\cos(2\theta_{\mathrm{C}})}.
\end{equation}
The dimensions and initial positions are shown in Fig. \ref{fig_cam2D}. Additionally in Fig. \ref{fig_cam2D} the normalized local collision measures are provided to show how the two shapes collide before incrementally changing their shapes. The plots  of local collision measures are offset and rescaled for clarity and do not correspond to the initial physical locations.

\begin{figure} [h]
	\centering
	\includegraphics[width=0.5\textwidth]{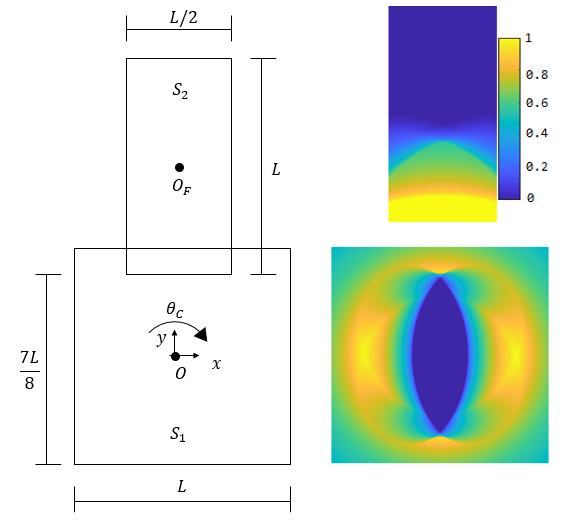}
	\caption{Two initial designs, cam ($\shape_\mathrm{C}$) and follower ($\shape_\mathrm{F}$) (left). The local collision measure for each domain is shown (right).} 
	\label{fig_cam2D}
\end{figure}

The temporal resolution is 1,000 time steps while the spatial resolutions are $400 \times 400$ for the cam and $400 \times 200$ for the follower. This problem is particularly interesting because the unsweep operations result in a loss of contact at certain times in the trajectory as shown in Fig. \ref{fig_loss_contact}. Our co-generation process produces persistent contact to ensure functionality, detailed below. 

\begin{figure} [h]
	\centering
	\includegraphics[width=0.5\textwidth]{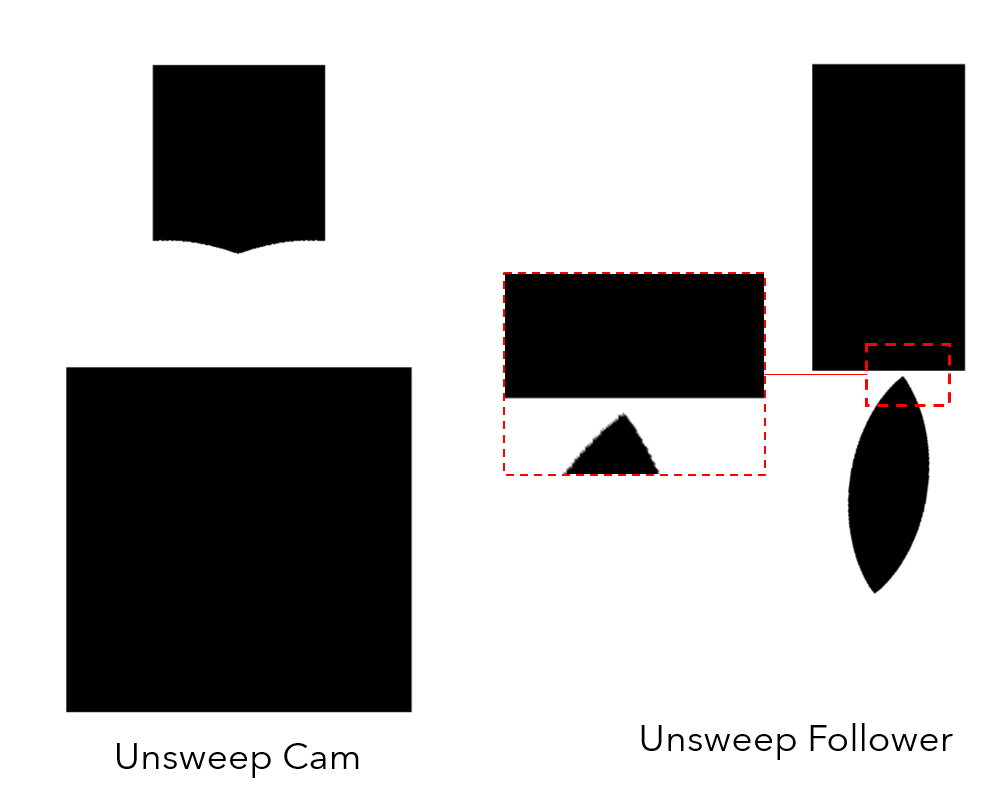}
	\caption{Illustration of how the unsweep operation can result in a loss of contact for the cam and follower.} 
	\label{fig_loss_contact}
\end{figure}

The co-generation procedure was applied for $\gamma$ varying from 0 to 1 in 0.1 increments. Fig. \ref{fig_cam_dist} provides a plot of the average minimum distance between the two solutions as they move through their trajectories as a function of $\gamma$.  As discussed, certain $\gamma$ values result in a loss of contact and therefore would not be functional. But in the range of approximately $\gamma \in (0.7,0.9)$ the solutions maintains contact even though it is not an explicit constraint. Remember that our approach does not guarantee persistent contact, as (a) it may not even be possible for any maximal pair for the given motion; (b) the proper shapes may not be obtainable by the specific subset of maximal pairs parameterized by $\gamma$ and computed by our incremental procedure. In particular, maximality in terms of volumes and collision avoidance may imply contact in one configuration, but not persistent contact throughout the whole motion, as shown in the results of Fig. \ref{fig_cam_dist}. Nevertheless, our observation is that co-generation is a necessary approach to create functional parts. Future work should focus on studying shape and motion properties with regards to persistent contact.  

\begin{figure} [h]
	\centering
	\includegraphics[width=0.5\textwidth]{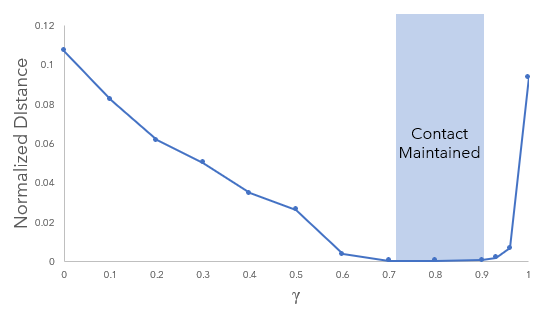}
	\caption{Average minimum distance between cam and follower as a function of $\gamma$ which illustrates co-generation is necessary to maintain contact during motion.} 
	\label{fig_cam_dist}
\end{figure}

To demonstrate the improved contact via co-generation, the solids generated when $\gamma = 0.8$ are shown in Fig. \ref{fig_cam2D_motion} at various positions during its trajectory. The cam and follower never lose contact during motion which is a necessary condition for functionality.

\begin{figure} [h]
	\centering
	\includegraphics[width=0.5\textwidth]{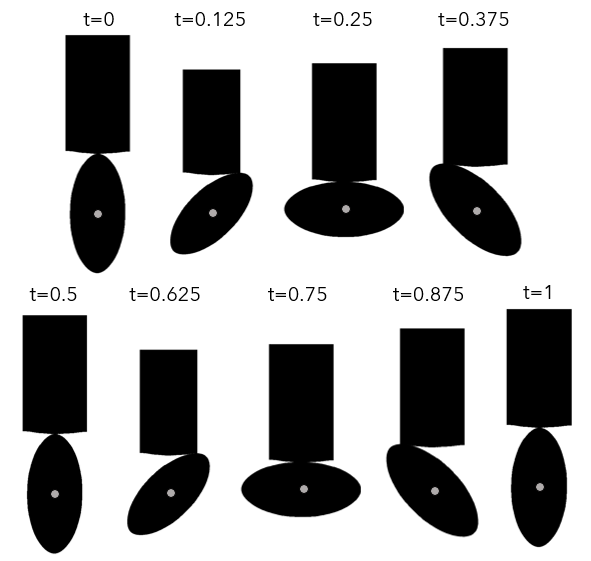}
	\caption{Cam and follower systems co-generated with $\gamma = 0.8$ shown at various configurations during the motion.} 
	\label{fig_cam2D_motion}
\end{figure}

The discretized sensitivity fields defined in \eq{eq_g1_matrix_12} and \eq{eq_g2_matrix_21} are plotted in Fig. \ref{fig_sensitivity} during the first iteration of the optimization loop. 

\begin{figure}[h]
	\centering
	\includegraphics[width=0.5\textwidth]{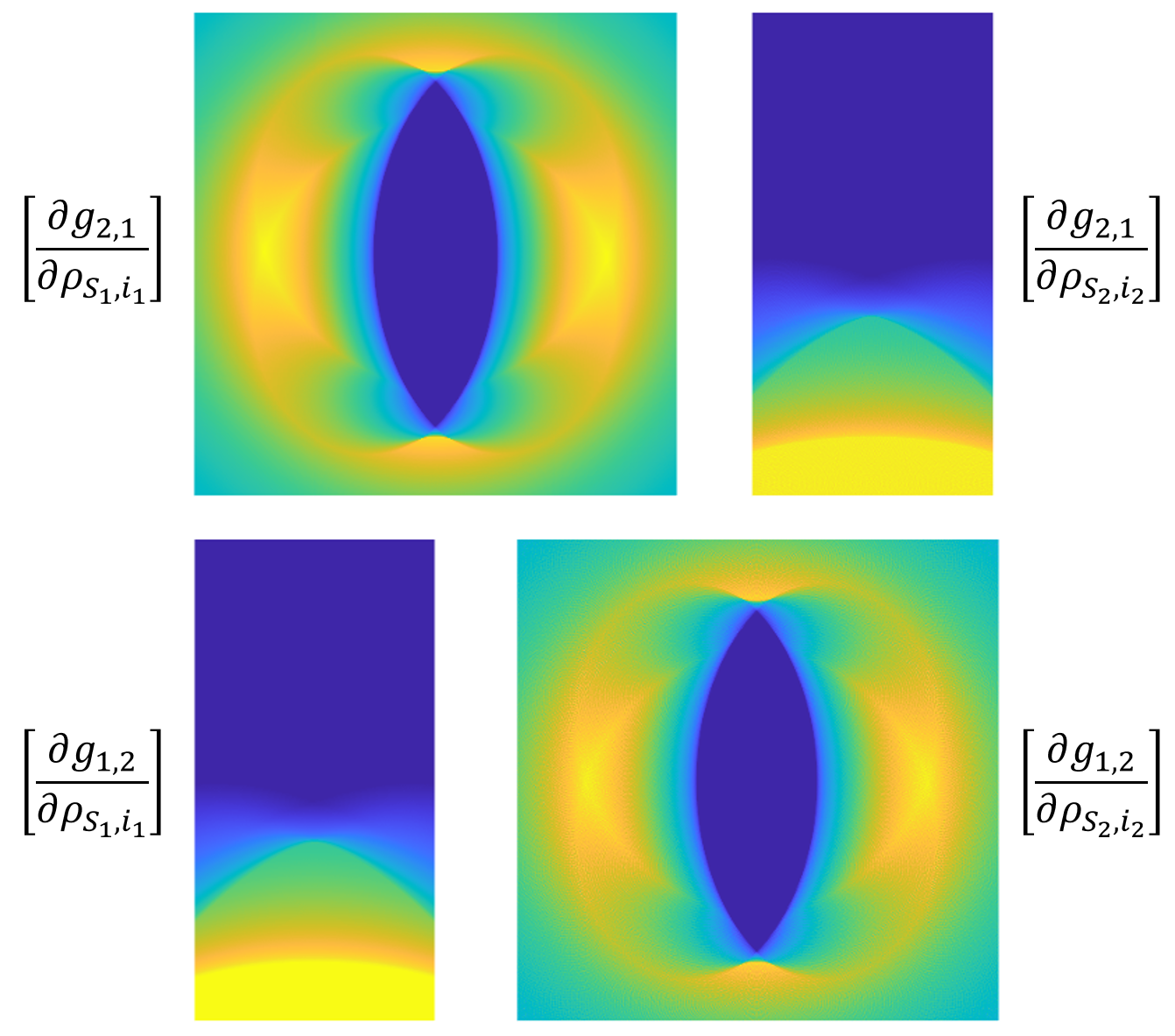}
	\caption{Plots showing the four discrete sensitivity fields for the initial 2D cam and follower example.} 
	\label{fig_sensitivity}
\end{figure}

\subsection{3D Cam and Follower}

The third example is a 3D cam/follower system. The cam rotates around the $z-$axis by an angle $\theta_\mathrm{C}$ at a constant angular velocity. Simultaneously, the follower rotates around the $x-$axis with by an angle $\theta_\mathrm{F}$ where $\theta_\mathrm{F} = \nicefrac{1}{2}|\sin(\theta_\mathrm{C})|$. The rotation matrices describing the motion of the cam and follower, $R_\mathrm{C}$ and $R_\mathrm{C}$, respectively, are provided below:
\begin{align}
    \label{RCam}
    R_{\mathrm{C}} &=  
    \begin{bmatrix}
    ~\cos(\theta_{\mathrm{C}}) & ~\sin(\theta_{\mathrm{C}}) & 0 \\
    -\sin(\theta_{\mathrm{C}}) & ~\cos(\theta_{\mathrm{C}}) & 0 \\
    0 & 0 & 1 \\
    \end{bmatrix}, \\
    \label{RFol}
    R_{\mathrm{F}} &=  
    \begin{bmatrix}
    1 & 0 & 0 \\
    0 & ~\cos(\theta_{\mathrm{F}}) & ~\sin(\theta_{\mathrm{F}}) \\
    0 & -\sin(\theta_{\mathrm{F}}) & ~\cos(\theta_{\mathrm{F}}) \\
    \end{bmatrix},
\end{align}
The design domain dimensions at initial configurations are shown in Fig. \ref{fig_cam3D}. The temporal resolution is 5,000 time steps, while the spatial resolutions are $105 \times 105 \times 105$ for the cam and $105 \times 105 \times 140$ for the follower. Each domain has over a million voxels. Similar to the 2D cam/follower system, if the unsweep operation was used to generate the collision-free shapes, a loss of contact would occur. By co-generating the solids, on the other hand, contact can be maintained as shown in Fig. \ref{fig_cam3D}, where the cam and follower are rendered in various relative configurations during the motion for $\gamma = 0.5$. In addition, a convergence plot is provided for $\gamma = 0.5$ in Fig. \ref{fig_cam3D_conv}.

\begin{figure} [h]
	\centering
	\includegraphics[width=0.5\textwidth]{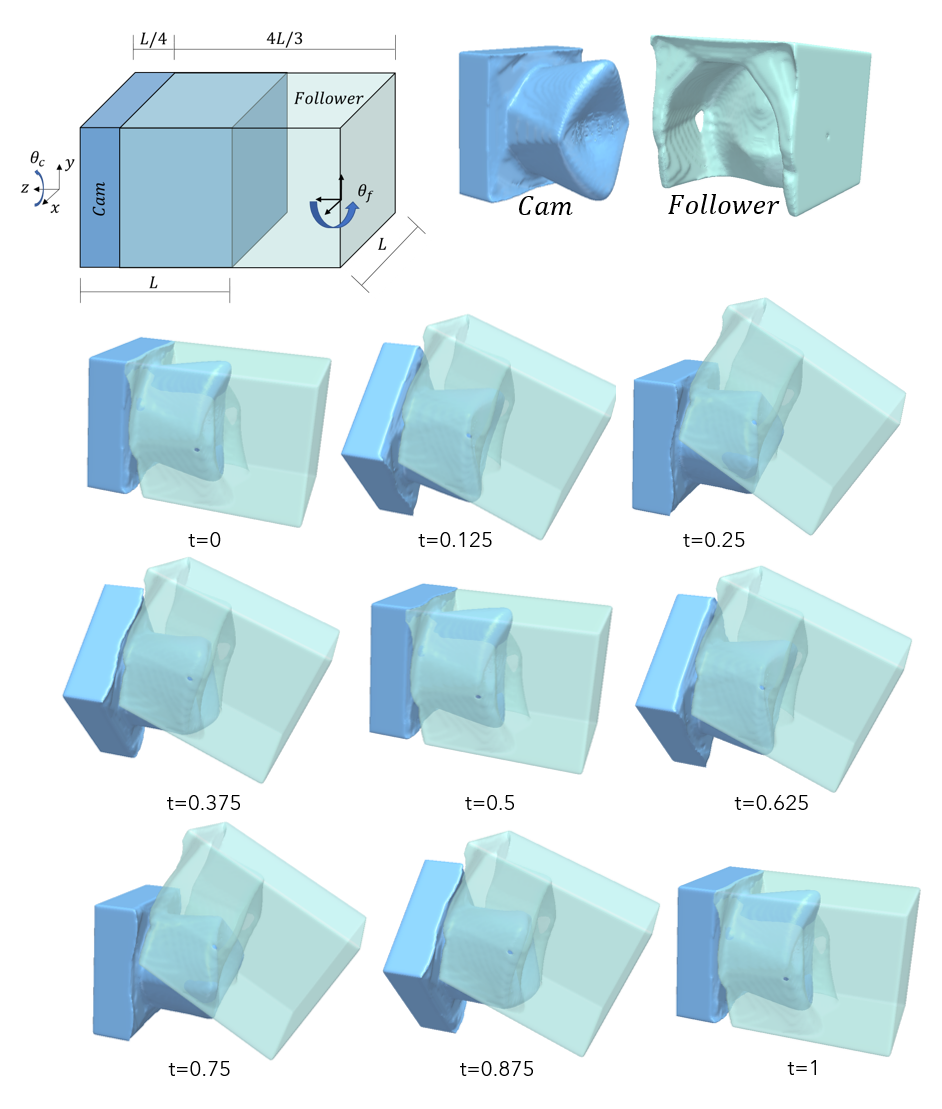}
	\caption{Initial solids for a spherical cam and follower.} 
	\label{fig_cam3D}
\end{figure}

\begin{figure} [h]
	\centering
	\includegraphics[width=0.5\textwidth]{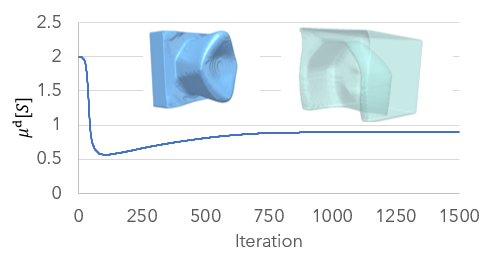}
	\caption{Convergence of summed $\dimm-$measures of the two collision-free solids for $\gamma = 0.5$.} 
	\label{fig_cam3D_conv}
\end{figure}

\subsection{3D Bolt and Nut}

The final example is a 3D bolt and nut pair. The dimensions and initial positions are shown in Fig. \ref{fig_bolt3D}, where $\phi_\mathrm{S}$ is the rotation angle of the bolt around the $z-$axis with a prescribed pitch of $P_\mathrm{S} = \nicefrac{L}{5}$, while the nut is stationary. The homogeneous transformation matrix $T_{\mathrm{S}}$, describing the relative screw motion is:
\begin{equation}
\label{screwEquation}
    R_{\mathrm{S}}=  
    \begin{bmatrix}
    ~\cos(\phi_{\mathrm{S}}) & ~\sin(\phi_{\mathrm{S}}) & 0 & 0 \\
    -\sin(\phi_{\mathrm{S}}) & ~\cos(\phi_{\mathrm{S}}) & 0  & 0\\
    0 & 0 & 1 & { {-L}\over{10\pi}}\phi \\
    0 & 0 & 0 & 1 \\
    \end{bmatrix}, \\
\end{equation}
where the translation ${{-L}\over{10\pi}}\phi_{\mathrm{S}}$ along and rotation $\phi_{\mathrm{S}} \in [0,8\pi)$ around the $z-$axis are linearly related. During the motion, the bolt makes 4 full turns and moves by $4 P_{\mathrm{S}} = 0.8 L$. The temporal resolution is 5,000 time steps, while the spatial resolutions are $50 \times 50 \times 150$ for the bolt and $100 \times 100 \times 50$ for the nut. One particular solution in which $\gamma = 0.2$ is provided in Fig. \ref{fig_bolt3D}.

\begin{figure} [!htb]
	\centering
	\includegraphics[width=0.5\textwidth]{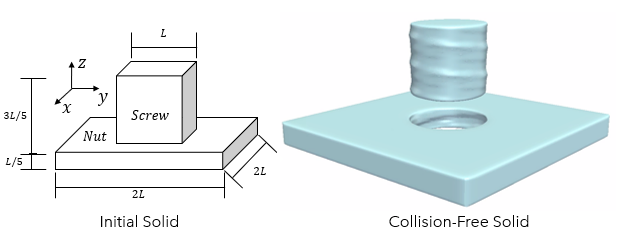}
	\caption{Initial design domains (left) and collision-free bolt and nut generated (right) with threads naturally emerging from co-design.} 
	\label{fig_bolt3D}
\end{figure}

The automatic co-generation of threads and convergence to a bolt and nut pair, starting from two colliding cubic regions, is a significant result. The threads appear naturally and automatically to generate maximal collision-free pairs for a given screw motion, leading to a bolt and nut pair with the same pitch as that of the given screw motion.
\begin{table*}  [!htb]
\centering
\caption{Summary of computation time for various examples provided in Section \ref{sec_results}.}
	\begin{tabular}{||l l l l l l||} 
		\hline
		Case Study & Spatial Res. (1) & Spatial Res. (2) & Temporal Res. & Mean Comp. Time (sec) & Max Iters. \\ [0.5ex] 
		\hline\hline
		2D Gear/Gear & 400$\times$400 & 400$\times$400 & 500 & 705.2 & 150 \\ 
		2D Cam/Follower & 400$\times$400 & 400$\times$400 & 1,000 & 949.1 & 200 \\
		3D Cam/Follower& 105$\times$105$\times$105 & 105$\times$105$\times$140 & 5,000 & 7,890 & 500\\
		3D Bolt/Nut & 50$\times$50$\times$50 & 100$\times$100$\times$50 & 5,000 & 4,201 & 350\\ [1ex] 
		\hline
	\end{tabular}
\label{table:1}
\end{table*}

However, note that the bolt and nut pair is not the only locally optimal solutions for a screw motion according to the formulation in \eq{eq_TOproblem}. A cylindrical peg and hole pair, for example, would satisfy all the enforced constraints, including maximal collision-avoidance, maximal (space-filling) volume, and desired volume ratio, while ensuring persistent contact as a side-effect. However, there are infinitely many screw and nut pairs (of different thread profiles) that could satisfy the constraints with the same maximal volume. Hence, it appears that finding one of these threaded solids is more likely than finding the degenerate extreme case of cylindrical peg and hole (i.e., no threads). It appears that our approach converged to one such local extremum of the optimization problem.

It is also worth noting that this procedure certainly does not generate all possible maximally collision-free pairs. Additional constraints or an entirely different set of objective functions and constraints---e.g., maximizing contact \cite{Lysenko2016effective} or complementarity \cite{Behandish2017shape} measures as opposed to total volume and volume ratio---may be used to identify more maximal pairs with better form, fit, or function.

Table \ref{table:1} summarizes the presented results. The spatial resolution of both design domains, the temporal resolution of the motions, the mean computation time for each optimization run, and the maximum number of optimization loop iterations are presented. Our numerical procedure required running times on the order of minutes and hours, indicating the scalability of the approach for high-resolution spatio-temporal discretizations. 

\subsection{Summary of Results \& Discussion}

By providing various examples, we demonstrated the ability of our approach to generate pairs of collision-free solids in both 2D and 3D, discretized by immersing them into a pair of Cartesian grids using FV and FS schemes. By precomputing the sparse pairwise correlation matrix for a pair of grids in relative motion, we demonstrated the scalability of the approach. The most computationally demanding task in the design loop is cast into a pre- and post-multiplication of the correlation matrix with the design variables (namely, the density arrays for each solid). The results indicate that co-generation is essential, not only for generating a broad family of collision-free solids to enable downstream design flexibility, but also to find well-fitting solids that maintain contact throughout the motion.

There are a few seemingly arbitrary choices that may affect the outcome of our approach. While the choice of initial solids can impact the final design in the extreme cases (e.g., when they are too small to collide resulting in an ill-posed optimization problem) the initial shapes appear less consequential than the motion and choice of the hyper-parameter (i.e., volume ratio) $\gamma$. There are uncountable pairs of continuum solids and exponentially many pairs of corresponding discrete (e.g., voxelized) solids that are maximal, among which we have chosen a one-dimensional subset parameterized by $\gamma$. A natural extension of the current work is to explore higher-dimensional subsets with more parameters, providing multiple ``knobs'' to designers to adjust additional/implicit design requirements while satisfying maximal collision-free contact.

Hopefully, our work will inspire formulating better objective functions and constraints that encode complex requirements such as physics, contact, and complementarity. A particularly promising direction is to combine the maximal collision avoidance with physics-based performance criteria (e.g., computed by FEA for a given set of loading conditions on the parts), manufacturability \cite{Mirzendehdel2020topology,Mirzendehdel2022topology}, and other constraints \cite{Morris2021topology} using the integration framework in \cite{Mirzendehdel2019exploring}.

\section{Conclusion} \label{sec_conclusion}

In this paper we introduced an automated and scalable procedure for generative co-design of pairs of solids under one-parametric motions to satisfy collision avoidance or (equivalently) containment constraints. Our automated procedure designs a {\it family} of ``maximal pairs'' of collision-free solids simultaneously. Our procedure can design shapes that maintain contact (when possible) while avoiding collisions, using a hyper-parameter to control the relative volume of their modified regions. Unlike existing methods that require making arbitrary restrictions such as parameterizing the shapes for common/simple motions or fixing one shape and computing the other via unsweep, our design process avoids premature decisions and unnecessary restrictions that lead to suboptimal solutions. We also show that unsweep of each solid against the other is subsumed as a special case of our operation at the two extremes of the hyper-parameter spectrum.

While collision avoidance is not sufficient to ensure persistent contact, our results suggest that incremental co-design to reduce collision while maximizing volume (or perhaps other objective functions) is a viable route for designing mechanisms in which contact and complementarity are critical for function, such as cams/followers and gear trains. More effective contact or complementarity measures can likely be employed, using similar measure-theoretic approaches used to quantify collision, to replace the simple volume maximization. Additionally, the formulation of the collision measures as differentiable fields enables seamless integration into gradient-based design optimization approaches such as TO. In the context of design for assembly, this could broaden the scope of TO and generative design from part-level to assembly-level design, in which parts and sub-assemblies are simultaneously evolved for physics-based performance, manufacturing, and other objectives while ensuring parts are collision-free. 

It should be noted that when co-designing solids within larger assemblies (e.g., 100s of parts), one has to deal with combinatorial explosion of collision analysis among pairs, triplets, and so on. Future work in this direction should explore efficient ways to avoid such exhaustive computations. In the worst case scenario, the initial design domain of each component can be selected largely enough for all parts in an assembly to collide with each other. Computing all of the collision matrices on top of simultaneously optimizing all parts would be computationally intractable. One possible mitigation is to come up with domain restriction strategies or spatial sorting/hashing to determine components that are collision-free without computing all motions. In addition to prohibitive running times, precomputing the correlation matrices will be memory-intensive. While the matrices are often sparse, it may not be the case for exceptionally high spatial resolutions and complex motions leading to large numbers of point-voxel collisions in each time-step. Finally the use of symmetries based on the motions should be explored to reduce computational time and memory costs.

\section*{Acknowledgments}

The authors are thankful to Amir Mirzendehdel for his insights on topology optimization and Saigopal Nelaturi for his support of this project under the Design for Assembly research program at PARC.

\section*{References}
\bibliography{cogenCollision}

\appendix
\appendix

\section{Shape Discretization Schemes} \label{app_sec_disc} \label{app_sec_comp}

To enable finite representation of solids and digital computation of collision measures, let us approximate the indicator function using a finite linear combination:
\begin{equation}
    \rho^{}_\shape(\bx) = \sum_{i = 1}^n \rho^{}_{\shape, i} \phi^{}_i(\bx), \quad\rho^{}_{\shape,i} \in [0, \infty),~ \phi^{}_i:\R^\dimm \to \R, \label{app_eq_rho}
\end{equation}
where $\phi^{}_i(\bx)$ are local basis functions associated with a spatial tessellation and the coefficients $\rho^{}_{\shape, i} \geq 0$ are the design variables representing an equivalence class of solids with the same $\dimm-$measure properties, for $i = 1, 2, \ldots, n$. To approximately compute the integrals in \eq{eq_subs_g1} and \eq{eq_subs_g2}, we consider an asymmetric discretization strategy in which we discretize the stationary solid (i.e., the one to which the frame of reference is attached) via a finite volume (FV) scheme, while we discretize the moving solid (i.e., the one whose motion is observed) via a finite sample (FS) scheme.

For both schemes, let us immerse the solid in a uniform Cartesian grid of edge length $\epsilon > 0$, which is small enough to resolve the minimum features of the solid. Let $\bx_i \in \R^\dimm$ be the coordinates of the i\th vertex (i.e., $0-$cell) on the grid. Let $\cell_i(\epsilon) \subset \R^\dimm$ stand for the dual $\dimm-$cell (e.g., square pixels in 2D and cubic voxels in 3D):
\begin{equation}
    \cell_i(\epsilon) \triangleq \big[\bx_i-\nicefrac{\epsilon}{2}, \bx_i+\nicefrac{\epsilon}{2} \big]^\dimm
\end{equation}
To capture the $\dimm-$integral quantities accurately, let us associate to the i\th grid vertex/cell the $\dimm-$measure of the solid confined to $\cell_i(\epsilon)$:
\begin{equation}
    \rho^{}_{\shape, i} \triangleq \mu^\dimm[\shape \cap C_i(\epsilon)] = \int_{C_i(\epsilon)} \indic_{\shape}(\bx) ~d\mu^\dimm[\bx]. \label{eq_density}
\end{equation}

For the FV scheme, the basis functions are selected as unit $\dimm-$measure pulse functions, i.e., indicator functions of the cells $\cell_i (\epsilon)$ normalized by the cell $\dimm-$measure $\mu^\dimm[C_i(\epsilon)] = \epsilon^\dimm$:
\begin{equation}
    \phi^{\fv}_i (\bx) \triangleq \frac{\indic_{\cell_i(\epsilon)}(\bx)}{\mu^\dimm[C_i(\epsilon)]} = \frac{1}{\epsilon^\dimm} \indic_{\cell(\epsilon)} (\bx - \bx_i),
\end{equation}
where $\cell(\epsilon) \triangleq [+\nicefrac{\epsilon}{2},-\nicefrac{\epsilon}{2}]$. Using these basis functions in the discrete formula in \eq{app_eq_rho} gives a piecewise-constant approximation of the indicator function over the dual grid.

For the FS scheme, on the other hand, the basis functions are selected as unit impulse functions, i.e., shifted Dirac delta functions that ``lump'' a unit $\dimm-$measure at each cell center:
\begin{equation}
    \phi^{\fs}_i (\bx) \triangleq \dirac^\dimm(\bx - \bx_i) \cong \lim_{\epsilon \to 0^+} \frac{1}{\epsilon^\dimm} \indic_{\cell(\epsilon)} (\bx - \bx_i).
\end{equation}
Note that $\phi^{\fs}_i (\bx) = \lim_{\epsilon\to0^+} \phi^{\fv}_i (\bx)$. Both basis functions have a unit $\dimm-$measure, by construction. Using these basis functions in the discrete formula in \eq{app_eq_rho} gives a lumped-measure approximation of the indicator function over the primal grid.

Note that both FV and FS schemes are constructed such that the $\dimm-$measures of the solid over every subset of the $\dimm-$cells (e.g., digitized areas in 2D and volumes in 3D) are captured exactly for the finite $\epsilon > 0$, while other inexact integral properties converge to their exact values as $\epsilon \to 0^+$.

Using an FV scheme to approximate the indicator functions of stationary solids $\shape_1 \subseteq \Omega_1$ and $\shape_2 \subseteq \Omega_2$, we obtain:
\begin{align}
    \rho^{\fv}_{\shape_1} (\bx) = \sum_{i_1=1}^{n_1} \rho^{}_{\shape_1,i_1} \phi^{\fv}_i (\bx) = \frac{1}{\epsilon^\dimm} \sum_{i_1=1}^{n_1} \rho^{}_{\shape_1,i_1} \indic_{\cell(\epsilon)}(\bx - \bx^{}_{i_1}), \label{app_eq_fv_1} \\
    \rho^{\fv}_{\shape_2} (\bx) = \sum_{i_2=1}^{n_2} \rho^{}_{\shape_2,i_2} \phi^{\fv}_i (\bx) = \frac{1}{\epsilon^\dimm} \sum_{i_1=1}^{n_1} \rho^{}_{\shape_2,i_2} \indic_{\cell(\epsilon)}(\bx - \bx^{}_{i_2}), \label{app_eq_fv_2} 
\end{align}
where $\rho^{}_{\shape_1,i_1}, \rho^{}_{\shape_2,i_2} \in [0, 1]$ are $\dimm-$measures of intersections $\shape_1 \cap \cell_{i_1}(\epsilon)$ and $\shape_2 \cap \cell_{i_2}(\epsilon)$, respectively., i.e., the portions of each solid trapped inside a given cell.

Using an FS scheme to approximate the indicator functions of relatively moving solids $\shape_{1,2}(t) = \tau_{2,1}(t)\shape_1$ and $\shape_{2,1}(t) = \tau_{1,2}(t)\shape_2$ defined in \eq{eq_S12} and \eq{eq_S21}, we obtain:
\begin{align}
    \rho^{\fs}_{\shape_{1,2}(t)} (\bx) &= \rho^{\fs}_{\shape_1} (\tau_{1,2}(t)\bx) \\
    &= \sum_{i_1=1}^{n_1} \rho^{}_{\shape_1,i_1} \dirac^\dimm \left(\tau_{1,2}(t)\bx - \bx^{}_{i_1} \right), \label{app_eq_fs_1} \\
    \rho^{\fs}_{\shape_{2,1}(t)} (\bx) &= \rho^{\fs}_{\shape_2} (\tau_{2,1}(t)\bx) \\
    &= \sum_{i_2=1}^{n_2} \rho^{}_{\shape_2,i_2} \dirac^\dimm \left(\tau_{2,1}(t)\bx - \bx^{}_{i_2} \right), \label{app_eq_fs_2} 
\end{align}
Substituting the indicator functions of the stationary and moving solids in \eq{eq_subs_g1} and \eq{eq_subs_g2} with their $\dimm-$measure-preserving approximations in \eq{app_eq_fv_1} through \eq{app_eq_fs_2}, respectively, and rearranging sums and integrals, we obtain:
\begin{align}
    \gield^{}_{2,1} &= \lim_{\epsilon\to0^+} \int_0^1 \int_{\Omega_1} \rho^{\fs}_{\shape_{2,1}(t)}(\bx) \rho^{\fv}_{\shape_1}(\bx) ~d\mu^\dimm[\bx]~dt \nonumber \\
    &= \lim_{\epsilon\to0^+} \sum_{i_1=1}^{n_1}\sum_{i_2=1}^{n_2} \rho^{}_{\shape_1,i_1}\rho^{}_{\shape_2,i_2} \weight^{1,2}_{i_1, i_2}, \label{app_eq_g1_finite} \\
    \gield^{}_{1,2} &= \lim_{\epsilon\to0^+} \int_0^1 \int_{\Omega_2} \rho^{\fs}_{\shape_{1,2}(t)}(\bx) \rho^{\fv}_{\shape_2}(\bx) ~d\mu^\dimm[\bx]~dt \nonumber \\
    &= \lim_{\epsilon\to0^+} \sum_{i_1=1}^{n_1}\sum_{i_2=1}^{n_2} \rho^{}_{\shape_1,i_1}\rho^{}_{\shape_2,i_2} \weight^{2,1}_{i_2, i_1}, \label{app_eq_g2_finite} 
\end{align}
where the weights $\weight^{1,2}_{i_1, i_2}$ and $\weight^{2,1}_{i_2, i_1}$ are defined by:
\begin{align}
    \weight^{1,2}_{i_1, i_2} &\triangleq \frac{1}{\epsilon^\dimm} \int_0^1 \int_{\Omega_1} \dirac^\dimm \left(\tau_{2,1}(t)\bx - \bx^{}_{i_2} \right) \indic_{\cell_{i_1}}(\bx) ~d\mu^\dimm[\bx]~dt, \nonumber \\
    \weight^{2,1}_{i_2, i_1} &\triangleq \frac{1}{\epsilon^\dimm} \int_0^1 \int_{\Omega_2} \dirac^\dimm \left(\tau_{1,2}(t)\bx - \bx^{}_{i_1} \right) \indic_{\cell_{i_2}}(\bx) ~d\mu^\dimm[\bx]~dt. \nonumber
\end{align}
The $\delta^\dimm-$function has a ``sifting'' property; meaning that it turns $\dimm-$integrals into a reading of the integrand at the impulse centers (where the $\delta^\dimm-$function's input is zero):
\begin{align}
    \weight^{1,2}_{i_1, i_2} &= \frac{1}{\epsilon^\dimm}  \int_0^1 \indic_{\cell_{i_1}} \left( \tau_{1,2}(t)\bx^{}_{i_2} \right) ~dt, \label{app_eq_w12} \\
    \weight^{2,1}_{i_2, i_1} &= \frac{1}{\epsilon^\dimm}  \int_0^1 \indic_{\cell_{i_2}} \left( \tau_{2,1}(t)\bx^{}_{i_1} \right) ~dt. \label{app_eq_w21}
\end{align}

To pre-compute the grid correlations (i.e., weights) in \eq{app_eq_w12} and \eq{app_eq_w21}, the time integral can be discretized using a simple Riemann sum approximation:
\begin{align}
    \weight^{1,2}_{i_1, i_2} &= \lim_{\delta\to0^+} \frac{\delta}{\epsilon^\dimm}  \sum_{k = 1}^K \indic_{\cell_{i_1}} \left( \tau_{1,2}(t_k)\bx^{}_{i_2} \right), \label{precomp_eq_w12} \\
    \weight^{2,1}_{i_2, i_1} &= \lim_{\delta\to0^+} \frac{\delta}{\epsilon^\dimm}  \sum_{k = 1}^K \indic_{\cell_{i_2}} \left( \tau_{2,1}(t_k)\bx^{}_{i_1} \right), \label{precomp_eq_w21}
\end{align}
where $t_k \triangleq (k+\nicefrac{1}{2})\delta$ for $k = 0, 1, \ldots, K-1$ to uniformly discretize the time period $[0, 1]$ with time-steps of $\delta = \nicefrac{1}{K}$, small enough to capture the minimum features of the motion trajectories. Computing each sum takes $O(K)$ operations to evaluate and apply the motions $\tau_{1,2}(t_k) \in \motion_{1,2}$ and $\tau_{2,1}(t_k) \in \motion_{2,1}$ to each of the grid vertices $\bx_{i_1} \in \Omega_1$ and $\bx_{i_2} \in \Omega_2$, and testing the membership of the displaced points against $\cell_{i_1}$ and $\cell_{i_2}$, respectively. For all time-steps and grid points, this takes $O(n_1 n_2 K)$ operations, which is carried out once in a pre-processing step and can be perfectly parallelized on the CPU/GPU.

\section{Collision Sensitivity Analysis} \label{app_diff_measure}

The additive/subtractive topological sensitivity fields (TSFs) (denoted by superscripts $+$/$-$, respectively) for the global collision measures $\gield^{}_{2,1}$ and $\gield^{}_{1,2}$ in \eq{eq_subs_g1} and \eq{eq_subs_g2}, respectively, are defined in terms of the changes in measures with respect to infinitesimal inclusion/exclusion in the shapes, for a given relative motion.
\begin{itemize}
	\item With respect to changes in the stationary shapes, the TSFs are defined by:
	\begin{align}
		\TS^+_1 \gield^{}_{2,1} (\bx) &\triangleq \lim_{\epsilon\to 0^+} \frac{\gield^{}_{2,1^+}(\bx) - \gield^{}_{2,1}}{\epsilon^\dimm}, \label{eq_TSF_1} \\
		\TS^-_1 \gield^{}_{2,1} (\bx) &\triangleq \lim_{\epsilon\to 0^+} \frac{\gield^{}_{2,1} - \gield^{}_{2,1^-}(\bx)}{\epsilon^\dimm}, \label{eq_TSF_2}
	\end{align}
	where $\gield^{}_{2,1^+}(\bx)$ and  $\gield^{}_{2,1^-}(\bx)$ are the global collision measures for the pairs $\big(\shape_1 \cup [\bx - \nicefrac{\epsilon}{2}, \bx + \nicefrac{\epsilon}{2}]^\dimm, \shape_2\big)$ and $\big(\shape_1 - (\bx - \nicefrac{\epsilon}{2}, \bx + \nicefrac{\epsilon}{2})^\dimm, \shape_2\big)$, respectively. Similarly:
	\begin{align*}
		\TS^+_2 \gield^{}_{1,2} (\bx) &\triangleq \lim_{\epsilon\to 0^+} \frac{\gield^{}_{1,2^+}(\bx) - \gield^{}_{1,2}}{\epsilon^\dimm}, \\
		\TS^-_2 \gield^{}_{1,2} (\bx) &\triangleq \lim_{\epsilon\to 0^+} \frac{\gield^{}_{1,2} - \gield^{}_{1,2^-}(\bx)}{\epsilon^\dimm},
	\end{align*}
	where $\gield^{}_{1,2^+}(\bx)$ and $\gield^{}_{1,2^-}(\bx)$ are the global collision measures for the pairs $\big(\shape_2 \cup [\bx - \nicefrac{\epsilon}{2}, \bx + \nicefrac{\epsilon}{2}]^\dimm, \shape_1\big)$ and $\big(\shape_2 - (\bx - \nicefrac{\epsilon}{2}, \bx + \nicefrac{\epsilon}{2})^\dimm, \shape_1\big)$, respectively.
	\item With respect to changes in the moving shapes, the TSFs are defined by:
	\begin{align*}
		\TS^+_2 \gield^{}_{2,1} (\bx) &\triangleq \lim_{\epsilon\to 0^+} \frac{\gield^{}_{2^+,1}(\bx) - \gield^{}_{2,1}}{\epsilon^\dimm}, \\
		\TS^-_2 \gield^{}_{2,1} (\bx) &\triangleq \lim_{\epsilon\to 0^+} \frac{\gield^{}_{2,1} - \gield^{}_{2^-,1}(\bx)}{\epsilon^\dimm},
	\end{align*}
	where $\gield^{}_{2^+,1}(\bx)$ and $\gield^{}_{2^-,1}(\bx)$ are the global collision measures for the pairs $\big(\shape_1, \shape_2 \cup [\bx - \nicefrac{\epsilon}{2}, \bx + \nicefrac{\epsilon}{2}]^\dimm\big)$ and $\big(\shape_1, \shape_2 - (\bx - \nicefrac{\epsilon}{2}, \bx + \nicefrac{\epsilon}{2})^\dimm\big)$, respectively. Similarly:
	\begin{align*}
		\TS^+_1 \gield^{}_{1,2} (\bx) &\triangleq \lim_{\epsilon\to 0^+} \frac{\gield^{}_{1^+,2}(\bx) - \gield^{}_{1,2}}{\epsilon^\dimm}, \\
		\TS^-_1 \gield^{}_{1,2} (\bx) &\triangleq \lim_{\epsilon\to 0^+} \frac{\gield^{}_{1,2} - \gield^{}_{1^-,2}(\bx)}{\epsilon^\dimm},
	\end{align*}
	where $\gield^{}_{1^+,2}(\bx)$ and $\gield^{}_{1^-,2}(\bx)$ are the global collision measures for the pairs $\big(\shape_2, \shape_1 \cup [\bx - \nicefrac{\epsilon}{2}, \bx + \nicefrac{\epsilon}{2}]^\dimm\big)$ and $\big(\shape_2, \shape_1 - (\bx - \nicefrac{\epsilon}{2}, \bx + \nicefrac{\epsilon}{2})^\dimm\big)$, respectively.
\end{itemize}

Hereafter, we develop the relationships for the first two TSFs $\TS^+_1 \gield^{}_{2,1} (\bx)$ and $\TS^-_1 \gield^{}_{2,1} (\bx)$ defined in \eq{eq_TSF_1} and \eq{eq_TSF_2}, respectively, for sensitivity analysis with respect to an infinitesimal inclusion/exclusion in the stationary solid $\shape_1$. The same relationships hold for the remaining six TSFs due to the symmetry in formulation with respect to index assignment and choice of frame of reference.

It is easy to verify that the additive/subtractive TSFs vanish when the query point is inside/outside, respectively, of the solid with respect to which the TSFs are computed, because the solid does not change due to union/inclusion with a small neighborhood:
\begin{align*}
	\exists \epsilon>0 ~\text{s.t.}~ \gield^{}_{2,1^+}(\bx) = \gield^{}_{2,1} &~\Rightarrow~ \TS^+_1 \gield^{}_{2,1} (\bx) = 0, \quad\text{if}~ \bx \in \interior\shape_1, \\
	\exists \epsilon>0 ~\text{s.t.}~ \gield^{}_{2,1^-}(\bx) = \gield^{}_{2,1} &~\Rightarrow~ \TS^-_1 \gield^{}_{2,1} (\bx) = 0, \quad\text{if}~ \bx \in \exterior\shape_1,
\end{align*}
noting that a solid $\shape_1$ is a {\it closed} regular set, i.e., contains both its interior (an open set) $\interior \shape_1 = \shape_1 - \partial \shape_1$ and boundary $\partial \shape_1$, while the exterior is the complement (also an open set): $\exterior\shape_1 = \overline{\shape}_1 = \R^\dimm - \shape_1$. The interior, boundary, and exterior partition the $\dimm-$space (in/on/out classification).

The collision integral in \eq{eq_subs_g1} is an {\it additive} property, i.e., the change in the integral due to the change in shape can be computed by applying the integral to the change itself:
\begin{align}
	\TS^+_1 \gield^{}_{2,1} (\bx) &= \lim_{\epsilon\to0^+} \frac{1}{\epsilon^\dimm} \int_0^1 \int_{C(\bx, \epsilon) - \shape_1} \indic_{\shape_{2,1}(t)}(\bx')~d\mu^\dimm[\bx']~dt, \label{eq_inc} \\
	\TS^-_1 \gield^{}_{2,1} (\bx) &= \lim_{\epsilon\to0^+} \frac{1}{\epsilon^\dimm} \int_0^1 \int_{C(\bx, \epsilon) \cap \shape_1} \indic_{\shape_{2,1}(t)}(\bx')~d\mu^\dimm[\bx']~dt, \label{eq_exc}
\end{align}
where $C(\bx, \epsilon) \triangleq (\bx - \nicefrac{\epsilon}{2}, \bx + \nicefrac{\epsilon}{2})^\dimm$.

For interior points $\bx_\interior \in \interior\shape_1$, we have $C(\bx_\interior, \epsilon) \subset \shape_1$ for small enough $\epsilon > 0$, hence $C(\bx_\interior, \epsilon) - \shape_1 = \emptyset$ and the integral in \eq{eq_inc} vanishes, as expected. But $C(\bx_\interior, \epsilon) \cap \shape_1 = C(\bx_\interior, \epsilon)$ and the integral in \eq{eq_exc} reduces to:
\begin{equation}
	\TS^-_1 \gield^{}_{2,1} (\bx_\interior) = \lim_{\epsilon\to0^+} \int_0^1 \int_{C(\bx_\interior, \epsilon)} \frac{1}{\epsilon^\dimm} \indic_{\shape_{2,1}(t)}(\bx')~d\mu^\dimm[\bx']~dt. \label{eq_exc_2}
\end{equation}
If, additionally, $\tau_{2,1}(t)\bx_\interior \in \interior\shape_2$, i.e., $\bx_\interior \in \tau_{1,2}(t)\interior\shape_2$, then $C(\bx_\interior, \epsilon) \subset \tau_{1,2}(t) \interior\shape_2 = \interior\shape_{2,1}(t)$ for small enough $\epsilon > 0$, hence the inner integral equals $\epsilon^\dimm/\epsilon^\dimm = 1$.

For exterior points $\bx_\exterior \in \exterior\shape_1$, we have $C(\bx_\exterior, \epsilon) \subset \shape_1$ for small enough $\epsilon > 0$, hence $C(\bx_\exterior, \epsilon) \cap \shape_1 = \emptyset$ and the integral in \eq{eq_exc} vanishes, as expected. But $C(\bx_\exterior, \epsilon) - \shape_1 = C(\bx_\exterior, \epsilon)$ and the integral in \eq{eq_inc} reduces to:
\begin{equation}
	\TS^+_1 \gield^{}_{2,1} (\bx_\interior) = \lim_{\epsilon\to0^+} \int_0^1 \int_{C(\bx_\exterior, \epsilon)} \frac{1}{\epsilon^\dimm} \indic_{\shape_{2,1}(t)}(\bx')~d\mu^\dimm[\bx']~dt. \label{eq_inc_2}
\end{equation}
If, additionally, $\tau_{2,1}(t)\bx_\exterior \in \interior\shape_2$, i.e., $\bx_\exterior \in \tau_{1,2}(t)\interior\shape_2$, then $C(\bx_\exterior, \epsilon) \subset \tau_{1,2}(t) \interior\shape_2 = \interior\shape_{2,1}(t)$ for small enough $\epsilon > 0$, hence the inner integral equals $\epsilon^\dimm/\epsilon^\dimm = 1$.

To summarize, if either $\bx \in \interior\shape_1$ or $\bx \in \exterior\shape_1$, then:
\begin{itemize}
	\item If $\tau_{2,1}(t)\bx \in \interior\shape_2$, then the inner integral is 1.
	\item If $\tau_{2,1}(t)\bx \in \exterior\shape_2$, then the inner integral is 0.
\end{itemize}
The case for query points that fall on the boundaries of either shape is slightly more complicated.

Assuming the curve $\tau_{2,1}(t)\bx$ intersects the boundary $\partial \shape_2$ over singular points (as opposed to continuous curve segments) the outer integral measures the duration of time over which the trajectory collides with the solid $\shape_2$ as $\epsilon \to 0^+$. Hence, the limit converges and both additive and subtractive TSFs exists.%
\footnote{If the curve remains tangent to the boundary over a continuous curve segment, the duration of time over which this happens is weighted depending on the local neighborhood shape in the limit $\epsilon \to 0^+$, but the fact that TSF is convergent does not change.}

Note also that the the formulas for additive and subtractive TSFs are identical, hence they converge to the same formula as $\|\bx_\exterior - \bx_\interior\|_2 \to 0^+$, meaning that the interior and exterior query points converge to a common meeting point $\bx \in \partial \shape_1$ on the boundary. However, the value of $\TS^+_1 \gield^{}_{2,1} (\bx)$ and $\TS^-_1 \gield^{}_{2,1} (\bx)$ when the query point is precisely on the boundary $\bx \in \partial \shape_1$ may be different, depending on the shape of the complementary partial neighborhoods $C(\bx, \epsilon) - \shape_1$ and $C(\bx, \epsilon) \cap \shape_1$. In other words, even though the TSFs exist and have limit relationships, they are discontinuous as the query point passes through the boundary.

Next, let us consider the discretization schemes presented in \ref{sec_disc} and \ref{app_sec_disc} and how the TSFs relate to the discrete derivatives given in \eq{eq_g1_matrix_12} and \eq{eq_g2_matrix_21} as $\epsilon \to 0^+$. Remember that the global collision measures can be approximated by \eq{eq_g1_finite} and \eq{eq_g2_finite} in the following sense:
\begin{align}
    \gield^{}_{2,1} &= \lim_{\epsilon\to0^+} \sum_{i_1=1}^{n_1}\sum_{i_2=1}^{n_2} \rho^{}_{\shape_1,i_1}\rho^{}_{\shape_2,i_2} \weight^{1,2}_{i_1, i_2}, \label{app_eq_g1_finite} \\
	\gield^{}_{1,2} &= \lim_{\epsilon\to0^+} \sum_{i_1=1}^{n_1}\sum_{i_2=1}^{n_2} \rho^{}_{\shape_1,i_1}\rho^{}_{\shape_2,i_2} \weight^{2,1}_{i_2, i_1}, \label{app_eq_g2_finite}
\end{align}
where $\rho^{}_{\shape_1,i_1}, \rho^{}_{\shape_2,i_2} \geq 0$ were defined by \eq{eq_density}. Applying the additive/subtractive TSF operators to both sides of \eq{app_eq_g1_finite}, which we explored above, yields:
\begin{equation}
	\TS^\pm_1 \gield^{}_{2,1} (\bx) = \lim_{\epsilon\to0^+} \sum_{i_1=1}^{n_1}\sum_{i_2=1}^{n_2} \left(\TS^\pm_1\rho^{}_{\shape_1,i_1} (\bx) \right)\rho^{}_{\shape_2,i_2} \weight^{1,2}_{i_1, i_2}, \label{eq_TSF_g_lim}
\end{equation}
noting that the infinitesimal inclusion/exclusion are applied to $\shape_1$ in this case, thus they affect $\rho^{}_{\shape_1,i_1}$ but not $\rho^{}_{\shape_2,i_2}$. 

Substituting for \eq{eq_density} and, once again, exploiting the additivity of the integral definition, we obtain:
\begin{align*}
	\TS^+_1 \rho^{}_{\shape_1, i_1}(\bx) &= \lim_{\epsilon'\to 0^+} \frac{1}{(\epsilon')^\dimm} \int_{C(\bx_{i_1}, \epsilon) - \shape_1} \indic_{C(\bx, \epsilon')}(\bx') ~d\mu^\dimm[\bx'], \\
	\TS^-_1 \rho^{}_{\shape_1, i_1}(\bx) &= \lim_{\epsilon'\to 0^+} \frac{1}{(\epsilon')^\dimm} \int_{C(\bx_{i_1}, \epsilon) \cap \shape_1} \indic_{C(\bx, \epsilon')}(\bx') ~d\mu^\dimm[\bx'].
\end{align*}
Once again, the following two cases are simple:
\begin{itemize}
	\item If the finite grid cell is completely inside $\shape_1$, then $C(\bx_{i_1}, \epsilon) \cap \shape_1 = C(\bx_{i_1}, \epsilon)$, hence:
	\begin{equation}
		\TS^-_1 \rho^{}_{\shape_1, i_1}(\bx) = \lim_{\epsilon'\to 0^+} \int_{C(\bx_{i_1}, \epsilon)} \frac{1}{(\epsilon')^\dimm}\indic_{C(\bx, \epsilon')}(\bx') ~d\mu^\dimm[\bx']. \label{eq_TSF_rho_exc}
	\end{equation}
	and $\TS^+_1 \rho^{}_{\shape_1, i_1}(\bx) = 0$ because $C(\bx_{i_1}, \epsilon) - \shape_1 = \emptyset$.
	\item If the finite grid cell is completely outside $\shape_1$, then $C(\bx_{i_1}, \epsilon) - \shape_1 = C(\bx_{i_1}, \epsilon)$, hence:
	\begin{equation}
		\TS^+_1 \rho^{}_{\shape_1, i_1}(\bx) = \lim_{\epsilon'\to 0^+} \int_{C(\bx_{i_1}, \epsilon)} \frac{1}{(\epsilon')^\dimm}\indic_{C(\bx, \epsilon')}(\bx') ~d\mu^\dimm[\bx']. \label{eq_TSF_rho_inc}
	\end{equation}
	and $\TS^-_1 \rho^{}_{\shape_1, i_1}(\bx) = 0$ because $C(\bx_{i_1}, \epsilon) \cap \shape_1 = \emptyset$.
\end{itemize}
Once again, the situation is more complicated when the grid cell cuts through the boundary, i.e., $C(\bx_{i_1}, \epsilon) \cap \partial \shape_1 \neq \emptyset$.

Note the subtle differences among the inclusion/exclusion center $\bx \in \R^\dimm$ (i.e., the point at which TSFs are queried), the grid cell center $\bx_{i_1} \in \R^\dimm$ (used to discretize $\shape_1$), the integration variable $\bx' \in \R^\dimm$, as well as the distinction between the grid cell size $\epsilon > 0$ (finite in \eq{eq_TSF_rho_exc} and \eq{eq_TSF_rho_inc} for a given grid, despite infinitesimal in \eq{eq_TSF_g_lim}) and the inclusion/exclusion size $\epsilon' > 0$ (infinitesimal).

Now, if we let $\bx = \bx_{i_1}$, for small enough $\epsilon' > 0$ we will have $\epsilon' < \epsilon$ thus $C(\bx, \epsilon') \subset C(\bx, \epsilon)$ and the identical integrals in \eq{eq_TSF_rho_exc} and \eq{eq_TSF_rho_exc} yield $(\epsilon')^\dimm/(\epsilon')^\dimm=1$, thus \eq{eq_TSF_g_lim} yields:
\begin{equation}
	\TS^\pm_1 \gield^{}_{2,1} (\bx_i) = \lim_{\epsilon\to0^+} \sum_{i_1=1}^{n_1}\sum_{i_2=1}^{n_2} \rho^{}_{\shape_2,i_2} \weight^{1,2}_{i_1, i_2}, \label{eq_TSF_g_simp}
\end{equation}
The right-hand side is the limit of the discrete derivative $\partial \gield^{}_{2,1}/\partial \rho_{\shape_1, i_1}$ in \eq{eq_g1_matrix_12}, proving that the discrete sensitivity converges to the topological sensitivity for the fully-internal and fully-external grid cells. If the cell center is not on the boundary, even if the cell cuts through the boundary, it will become fully-internal or fully-external for small enough $\epsilon > 0$ (i.e., resolution refinement), thus the convergence holds. If the cell center is precisely on the boundary, the derivation becomes slightly more involved, but follows a similar logic.

Note also that if two solids are in collision-free contact, the TSFs of the global collision measures queried at the boundary will be nonzero. The reason is that a small perturbation to the shape at the contact interface may result in collision. While this may create conservative material removal based on discrete sensitivity analysis (e.g., via TO), such issues are inevitable due to the loss of information as a result of discretization, e.g., ambiguity at the boundaries where pixels (2D) or voxels (3D) approximate an uncountably many different geometries with the same area (2D) or volume (3D). These issues can only be ameliorated by resolution refinement.


\end{document}